\documentstyle[amsfonts,11pt]{article}
 
\setlength\topmargin{-.6in}
\setlength\oddsidemargin{.4in}
\setlength\textwidth{6in}
\setlength\textheight{8.75in}
\title{Domain Wall Equations, Hessian of Superpotential, and Bogomol'nyi Bounds}
\author{Shouxin Chen\footnote{Email address: chensx@henu.edu.cn}\\Institute of Contemporary Mathematics\\School of Mathematics and Statistics\\Henan University\\
Kaifeng, Henan 475004, PR China\\\\
Yisong Yang\footnote{Email address: yisongyang@nyu.edu}\\Department of Mathematics\\Tandon School
of Engineering\\New York University\\Brooklyn, New York 11201, U. S. A}
\date{}
\newcommand{\bfR}{{\Bbb R}}

\newcommand{\sech}{\mbox{ sech}}

\newtheorem{oldtheorem}{Theorem}[section]
\newtheorem{oldassertion}[oldtheorem]{Assertion}
\newtheorem{oldproposition}[oldtheorem]{Proposition}

\newtheorem{oldlemma}[oldtheorem]{Lemma}
\newtheorem{olddefinition}[oldtheorem]{Definition}
\newtheorem{oldclaim}[oldtheorem]{Claim}
\newtheorem{oldcorollary}[oldtheorem]{Corollary}

\newenvironment{theorem}{\begin{oldtheorem}$\!\!\!${\bf.}}{\end{oldtheorem}}

\newenvironment{proposition}{\begin{oldproposition}$\!\!\!${\bf.}}
{\end{oldproposition}}
\newenvironment{lemma}{\begin{oldlemma}$\!\!\!${\bf.}}{\end{oldlemma}}

\newbox\qedbox
%\setbox\qedbox=\hbox{$\Box$}
\newenvironment{proof}{\smallskip\noindent{\bf Proof.}\hskip \labelsep}%
                        {\hfill\penalty10000\copy\qedbox\par\medskip}

\setlength{\baselineskip}{1.2\baselineskip}

\newcommand{\dd}{\mbox{d}}
\newcommand{\ee}{\end{equation}}
\newcommand{\be}{\begin{equation}}\newcommand{\bea}{\begin{eqnarray}}
\newcommand{\eea}{\end{eqnarray}}
\newcommand{\e}{\mbox{e}}
\newcommand{\pa}{\partial}
\newcommand{\vep}{\varepsilon}

\newcommand{\nn}{\nonumber}

\newcommand{\lm}{\lambda}

\begin{document}
\maketitle
\begin{abstract} An important question concerning the classical solutions of
the equations of motion arising in quantum field theories at the BPS critical coupling is whether all finite-energy solutions
are necessarily BPS. In this paper we present a study of this basic question in the context of the domain wall equations
whose potential is induced from a superpotential so that the ground states  are the critical points of
the superpotential. We prove that the definiteness of the Hessian of the superpotential suffices to ensure that all
finite-energy domain-wall solutions are BPS. We give several examples to show that such a BPS property may fail such that
non-BPS solutions exist when
the Hessian of the superpotential is indefinite.
\end{abstract}
\section{Introduction}

Domain walls, vortices, monopoles, and instantons are classical solutions of various equations of motion in quantum field
theory describing particle-like behavior in interaction dynamics in one, two, three, and four spatial dimensions, 
respectively \cite{Wein}. Due to the complexity of these equations  it is difficult to obtain a full understanding of
their solutions in general settings. Fortunately, at certain critical coupling limits, enormous insight into the solutions may be obtained
from investigating the so-called BPS solutions, after the pioneering works of Bogomol'nyi \cite{B} and 
Prasad and Sommerfield \cite{PS}. Mathematically, at such critical limits, often referred to as the BPS limits, the 
original second-order
equations of motion permit an elegant reduction into some first-order equations, called the BPS equations, whose
solutions are automatically the energy minimizers of the models, and hence, the most physically relevant field configurations.
The minimum energy values achieved by the BPS solutions are called the BPS or Bogomol'nyi bounds and are often
expressed in terms of the topological charges of the models.
Physically, the BPS limits correspond to the situations that
the Higgs scalar and gauge boson masses are equal in
the Abelian Higgs model \cite{Hart}, the Higgs doublet and $Z$ vector boson masses are equal
in the electroweak theory \cite{AO}, both for vortices, the Higgs potential vanishes but symmetry is
spontaneously broken at an arbitrary level for monopoles \cite{JT}, and only gauge fields are present
for instantons \cite{A,R,Nash}. There has also been some study on
why BPS bounds exist in quantum field theory \cite{HS}. For more recent developments of the application of the BPS
reduction in supersymmetric field theory, see \cite{Kon-survey,ShY2,ShY,Tong} for surveys.

At the BPS coupling then an important question arises: Are the original second-order equations of motion equivalent to their
BPS-reduced first-order equations? In other words, are all finite-energy critical points of the field-theoretical energy functional the solutions of the BPS equations, and hence, attain the BPS bounds? For the Abelian Higgs vortices, Taubes 
proved \cite{JT,T1} that the answer is yes, and for the non-Abelian Yang--Mills--Higgs monopoles, he established \cite{T2}
that the answer is no such that there are nonminimal solutions of the Yang--Mills--Higgs equations in the BPS coupling which are
not solutions to the BPS equations. For the Yang--Mills instantons, Sibner, Sibner, and Uhlenbeck \cite{SSU} gave a no
answer to the question and proved the existence of a nonminimal solution when the instanton charge is zero, and Bor \cite{Bor}, Parker \cite{Parker}, and Sadun and Segert \cite{SS1,SS2,SS3} obtained the existence of nonminimal solutions
when the instanton charge is non-zero and non-unit, by exploring an equivariant structure in the geometric construction of the Yang--Mills fields. Much earlier, inspired by the work of Taubes \cite{T2} based on a Morse theory consideration, Manton \cite{Manton} and Klinkhamer
and Manton \cite{KM} investigated the possible existence of saddle point solutions, also referred to as sphalerons whose
energy gives rise to the height of a barrier for tunneling between two vacuum states of distinct topological charges, in
the Weinberg--Salam electroweak theory, by a demonstration that the field configuration space of the bosonic sector 
possesses a non-contractible loop. In \cite{FH} Forg\'{a}cs and Horv\'{a}th presented a series of examples of field-theoretical models in one, two, and three spatial dimensions that allow non-contractible loops in their configuration spaces. 
Hence these models may be candidates for the occurrence of saddle-point solutions and host barriers to topological
vacuum 
tunneling as in the electroweak theory \cite{KM,Manton}. These studies prompt a systematic investigation of the
existence of non-BPS solutions at the BPS limits, referred to here as the BPS problem.

Due to the complicated structures of various quantum field theory models, it will be difficult to obtain a thorough understanding of the BPS problem in its fully general setting. In this paper,  as a starting point, we consider a
general domain wall
model where the field configuration is an $n$-component real scalar field. As the name suggests, a domain wall is a dimensionally
reduced field configuration that is embedded into, and links two distinct domain phases realized as
ground states of,  the full field-theoretical model. Well-known classical examples of domain walls include the solutions of the sine--Gordon
equations describing a domain transition in terms of the magnetization orientation angle
in the Landau--Lifshitz theory of magnetism and the solutions of the Ginzburg--Landau equations
connecting the normal and superconducting phases so that the sign of its energy, called the surface energy, classifies
superconductivity. In modern physics, domain walls find applications in a wide range of subjects such as 
phase transitions in supersymmetric
field theory \cite{AT,CQR,SY} and supergravity theory \cite{CS}, monopole annihilation \cite{V1,V2}, and QCD confinement problem \cite{ABSY,ShY}. In our problem, the scalar field $u$ consists of $n$ real-valued scalar components, $u=(u_1,\dots,u_n)$,
and
the potential  $V(u)$ is given in terms of a superpotential $W(u)$ \cite{1,Alon,2,3,BLW,GT} such that $V(u)=\frac12|\nabla W(u)|^2$. Thus the minima (zeros) of $V$, which are the ground states of the model, are the critical points of $W$.
In this paper we show that the BPS problem is related to the definiteness of the Hessian of $W$ at its critical points
when these points are designated as the domain phases the domain wall solutions separate and connect.

An outline of the rest of the paper is as follows. In Section 2 we review the general multiple-scalar-field domain wall model,
the equations of motion, and the BPS equations,
where the potential (density) is induced from a superpotential. We then state and prove our main theorem that the
definiteness of the Hessian of the superpotential at at least one ground state (a phase domain) suffices to ensure that all finite-energy
domain wall solutions are BPS. In Sections 3--6 we present a series of examples to illustrate the applicability and limitation
of our main theorem. In particular, in Sections 4 and 5, we show by examples that when the definiteness of the Hessian fails there exist finite-energy domain
wall solutions which are not BPS. In Section 7 we conclude the paper.

\section{Domain wall equations and superpotential}
\setcounter{equation}{0}

Let $u=(u_1,u_2,\cdots,u_n)$ be a scalar field which is an $n$ real-component function over the Minkowski spacetime $\bfR^{1,1}$ of
signature $(+-)$ and use $\mu,\nu=0,1$ to denote the temporal and spatial coordinate indices
with $t=x^0$ and $x=x^1$ being the temporal and spatial coordinates, respectively. The domain wall
action density governing $u$ reads
\be\label{L}
{\cal L}=\frac12\sum_{i=1}^n\pa_\mu u_i\pa^\mu u_i-V(u),
\ee
where the potential energy density $V$ is given by
\be\label{x1.2}
V(u)=\frac12\sum_{i=1}^n W^2_{u_i}(u),\quad W_{u_i}=\frac{\pa W}{\pa u_i},
\ee
for a real-valued generating function $W$ \cite{1,Alon,2,3,BLW,GT}, commonly referred to as the superpotential in the supersymmetric field theory
formalism \cite{Alon,Beh,Dvali,Dvali2}, to be specified later. The Euler--Lagrange equations of (\ref{L}) are
\be\label{EL}
\ddot{u}_i-u_i''\equiv \pa_\mu\pa^\mu u_i=-\sum_{j=1}^n W_{u_i u_j} W_{u_j},\quad i=1,\dots,n,
\ee
which are $n$ coupled semilinear wave equations.
The energy is
\be\label{x1.0}
E(u)=\int^\infty_{-\infty}\left\{\frac12\sum_{i=1}^n \left(\dot{u}_i^2+{u'_i}^2\right)+V(u)\right\}\,\dd x,
\ee
which is conserved in view of (\ref{EL}).

We are interested in static solutions of (\ref{EL}). The energy of such a solution is
\be\label{x1.1}
E(u)=\int^\infty_{-\infty}\left\{\frac12\sum_{i=1}^n {u_i'}^2+V(u)\right\}\,\dd x.
\ee

In the static limit  the equations (\ref{EL}) become 
\be\label{x1.3}
u_i''=\sum_{j=1}^n W_{u_i u_j} W_{u_j},\quad i=1,2,\cdots,n,
\ee
which are the Euler--Lagrange equations of (\ref{x1.1}).

Let $u^1$ and $u^2$ be two zeros of $V$ or critical points of $W$, representing two distinct domain phases or ground states
of the model. We are interested in the domain-wall solutions of (\ref{x1.3}) which link the domain phases $u^1$ and $u^2$:
$
u(-\infty)=u^1, u(\infty)=u^2.
$

\subsection{First-order equations}

In view of the boundary condition $u(-\infty)=u^1, u(\infty)=u^2$, at the two spatial infinities, we can rewrite (\ref{x1.1}) as
\bea\label{x2.1}
E(u)&=&\int_{-\infty}^\infty \frac12\sum_{i=1}^n (u_i'\pm W_{u_i})^2\,\dd x\mp\int_{-\infty}^\infty \sum_{i=1}^n W_{u_i} u_i'\,\dd x\nn\\
&\geq&\left|W(u^2)-W(u^1)\right|,
\eea
according to
\be\label{x2.2}
W(u^2)-W(u^1)=\mp|W(u^2)-W(u^1)|.
\ee
Hence, the lower bound or the BPS bound on the right-hand side of (\ref{x2.1}) is attained if and only if $u$ satisfies the first-order BPS equations
\be\label{x2.3}
u_i'\pm W_{u_i}=0,\quad i=1,2,\cdots,n.
\ee

It is easy to check that (\ref{x2.3}) implies (\ref{x1.3}) regardless of the boundary condition. 

In this work we will identify some natural conditions that ensure  that (\ref{x1.3}) implies (\ref{x2.3}) as well so that
any finite-energy critical points of (\ref{x1.1}) must be absolute energy minimizers to achieve the BPS bound.

\subsection{A solution to the BPS problem}

To proceed, we begin by unveiling some natural conditions to be imposed on $W$. Recall that we are to show that a solution $u$ of (\ref{x1.3}) subject to the boundary condition $u(-\infty)=u^1, u(\infty)=u^2$
also fulfills the first-order system (\ref{x2.3}). To be specific, let us consider a test case when $W(u^1)>W(u^2)$.
 Thus we are to study (\ref{x2.3}) with the upper sign, i.e.,
\be\label{x3.7}
u'=-\nabla_u W,
\ee
{and that $u^1$ and $u^2$ are two equilibria of (\ref{x3.7}). Since we may view the independent variable $x$ dynamically (as a `time' variable), we see that the solution $u$ represents
an orbit connecting $u^1$ and $u^2$ as `time' evolves from $-\infty$ to $\infty$. Consequently, as the equilibrium points of (\ref{x3.7}), it would be natural to assume that $u^1$ is unstable but
$u^2$ is stable. This observation motivates the following condition
on the Hessian matrix $(W_{u_i u_j})$ at the two asymptotic states:
 \be\label{x3.8}
(W_{u_i u_j}(u^1))\mbox{ is negative definite;} \quad (W_{u_i u_j}(u^2))\mbox{ is positive definite. }
\ee
The desired solution of (\ref{x3.7}) then may well be viewed as representing a rolling particle that travels down from the local maximum $u^1$ to
the local minimum $u^2$ of the superpotential $W$.

Below is our main theorem aiming at a solution of the BPS problem of our interest.

\begin{theorem}\label{thm}
Consider the field-theoretical Lagrangian density (\ref{L}) governing an $n$ real-component scalar field $u$ so that
its static configuration
is of the energy (\ref{x1.1}) where the potential  $V$ is given by (\ref{x1.2}) with the
generating function or superpotential $W$ and let $u^1, u^2\in\bfR^n$ be two critical points of $W$
which are of course the zeros of $V$ serving as two ground states of the model. Moreover assume that
the Hessian $(W_{u_i u_j}(u))$ is either positive or negative definite at $u^1$ or $u^2$.
Then the Euler--Lagrange equations (\ref{x1.3}) and the BPS equations (\ref{x2.3}) are equivalent 
for solutions satisfying the boundary condition
\be\label{BC}
u(-\infty)=u^1,\quad u(\infty)=u^2,
\ee
which realizes a domain-wall type phase transition between the two ground states or phase domains.
Thus any finite-energy critical point of (\ref{x1.1}) satisfying the boundary condition
(\ref{BC})  must be an absolute energy minimizer
with the explicit BPS minimum energy value
\be\label{x3.20}
E=\left|W(u^1)-W(u^2)\right|.
\ee
Furthermore, if the Hessian $(W_{u_i u_j}(u))$ fails to be definite at both $u^1$ and $u^2$, then there are examples
that some finite-energy solutions of (\ref{x1.3}) are not the solutions of the BPS equations (\ref{x2.3}), subject to
the boundary condition (\ref{BC}), so that those solutions do not attain the BPS bound (\ref{x3.20}).
\end{theorem}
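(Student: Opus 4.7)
The plan is to quantify how close a given finite-energy solution $u$ is to a BPS configuration by introducing the ``defect'' vectors $P^{\pm}_i(x)=u_i'(x)\pm W_{u_i}(u(x))$, so that the upper (resp.\ lower) sign BPS equation in (\ref{x2.3}) is equivalent to $P^{+}\equiv 0$ (resp.\ $P^{-}\equiv 0$). Differentiating $P^{\pm}$ and substituting the static Euler--Lagrange system (\ref{x1.3}) leads, after cancellation of the $W_{u_iu_j}W_{u_j}$ terms, to
\begin{equation}
(P^{+})'=H(u(x))\,P^{+},\qquad (P^{-})'=-H(u(x))\,P^{-},
\end{equation}
where $H=(W_{u_iu_j})$ is the symmetric Hessian matrix of $W$. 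Each defect therefore satisfies a homogeneous linear ODE along the orbit, so by the uniqueness theorem for linear ODEs the vanishing of $P^{\pm}$ at a single value of $x$ propagates to $P^{\pm}\equiv 0$ on all of $\bfR$.

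I would next pin down the boundary behavior of $P^{\pm}$ from the finite-energy hypothesis. Writing (\ref{x1.3}) as $u''=\nabla V(u)$ with $V=\frac12|\nabla W|^2$, one obtains the first integral $\frac12|u'|^2-V(u)=C$. The finiteness of $E(u)$ together with the nonnegativity of $\frac12|u'|^2$ and $V$ forces $C=0$, since a nonzero constant is not integrable on $\bfR$. Consequently $|u'(x)|^2=|\nabla W(u(x))|^2$ pointwise, and combined with the boundary data (\ref{BC}) and the fact that $u^1,u^2$ are critical points of $W$, this yields $u'(\pm\infty)=0$ and $\nabla W(u(\pm\infty))=0$. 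Both defect vectors therefore satisfy $P^{+}(\pm\infty)=P^{-}(\pm\infty)=0$.

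With these boundary values the definiteness hypothesis closes the argument via a Gronwall-type estimate. Suppose, for example, that $H(u^{2})$ is positive definite with least eigenvalue $\lambda_0>0$; by continuity of $H\circ u$ and $u(\infty)=u^2$, one can pick $R$ and $\lambda\in(0,\lambda_0)$ so that $H(u(x))\ge \lambda I$ for all $x\ge R$. Setting $\phi(x)=|P^{+}(x)|^2$ and using the linear ODE above gives $\phi'(x)=2\,(P^{+})^{T}H(u(x))P^{+}\ge 2\lambda\,\phi(x)$ on $[R,\infty)$, so $\phi(x)\ge\phi(R)\,e^{2\lambda(x-R)}$. Since $\phi(\infty)=0$, this forces $\phi(R)=0$, and uniqueness propagates $P^{+}\equiv 0$, showing that $u$ solves the upper-sign BPS equation; the value (\ref{x3.20}) then follows at once from (\ref{x2.1}). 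The three remaining sub-cases (positive definite at $u^1$, negative definite at $u^1$, negative definite at $u^2$) are parallel: in each one chooses the defect $P^{+}$ or $P^{-}$ whose driving matrix $\pm H(u)$ is positive definite at the chosen endpoint, and runs the Gronwall estimate forward from $R$ or backward from $-R$ accordingly.

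The ``converse'' half of the theorem, the existence of non-BPS finite-energy solutions when $H$ is indefinite at both $u^1$ and $u^2$, is deferred to the explicit constructions of Sections 4 and 5 and requires no abstract argument here. The principal technical subtlety in the plan is the middle step: without the first integral $\frac12|u'|^2=V(u)$, finite energy alone would not guarantee $u'(\pm\infty)=0$, and without that anchoring one cannot initiate the Gronwall inequality at $\pm\infty$. Verifying rigorously that $C$ must vanish, and that the pointwise limits $u'(\pm\infty)=0$ genuinely follow (rather than merely $u'\in L^2$), is the place where care is required.
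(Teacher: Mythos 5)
Your proposal follows essentially the same route as the paper: form the defect $P^{\pm}_i=u_i'\pm W_{u_i}$, observe that the static Euler--Lagrange system turns it into a solution of the homogeneous linear system $(P^{\pm})'=\pm H(u(x))P^{\pm}$, force $|P^{\pm}|^2$ to vanish at one point via an exponential-growth/Gronwall dichotomy near the endpoint where $H$ is definite, and then propagate $P^{\pm}\equiv 0$ by uniqueness for linear ODEs. Two points of comparison are worth recording. First, your anchoring step differs from the paper's: you invoke the first integral $\frac12|u'|^2-V(u)=C$, show $C=0$ from integrability, and conclude the genuine pointwise limits $u'(\pm\infty)=0$; the paper avoids the first integral entirely and uses only $\liminf_{x\to-\infty}|u'(x)|^2=0$ (which follows from $u'\in L^2$), which already suffices to contradict the exponential blow-up. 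Your version is valid and yields slightly more, but the extra care you flag at the end is not actually needed for the argument to close. Second, your dispatch of the ``three remaining sub-cases'' contains a sign error: you say to pick the defect whose driving matrix is \emph{positive} definite at the chosen endpoint in all cases, but that only produces a contradiction at the $x=+\infty$ end (where $\phi'\geq 2\lambda\phi$ forces forward blow-up against $\phi(+\infty)=0$). At the $x=-\infty$ end a positive definite driving matrix gives $\phi(x)\leq\phi(-R)\,\e^{2\lambda(x+R)}\to 0$, which is perfectly consistent with $\phi(-\infty)=0$ and yields nothing; there you must instead choose the defect whose driving matrix is \emph{negative} definite at $u^1$ (as the paper does, taking $P^{+}$ when $H(u^1)<0$), so that $\phi'\leq-2\lambda\phi$ forces blow-up as $x\to-\infty$ unless $\phi(-R)=0$. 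The worked case you present ($H(u^2)$ positive definite with $P^{+}$) is correct, and the remaining cases do follow by the symmetries $x\mapsto-x$ and $W\mapsto-W$, but the rule as stated would select the wrong defect at $u^1$.
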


\begin{proof}
Let $u$ be a solution of (\ref{x1.3}) satisfying (\ref{BC}).  We establish the conclusion of the theorem when the Hessian
of $W$ is either negative or positive definite at $u^1$. In Sections 4 and 5, we shall present some examples
when the Hessian of $W$ is definite at neither $u^1$ nor $u^2$, as supplemental illustrations, to show in such a situation
that there exist non-BPS solutions.

First assume that $(W_{u_i u_j}(u^1))$ is negative definite. Hinted by the observation made preceding Theorem \ref{thm}, we aim to show that $u$ solves (\ref{x2.3}) with the upper sign. For this purpose, we set
\be\label{x3.2}
P_i=u_i'+W_{u_i},\quad i=1,2,\cdots,n,
\ee
and we are to establish the result
\be\label{x3.3}
P_i \equiv0, \quad i=1,2,\cdots,n.
\ee

To proceed, we differentiate (\ref{x3.2}) to get
\be\label{x3.4}
P_i'=u_i''+\sum_{j=1}^n W_{u_i u_j} u_j', \quad i=1,2,\cdots,n.
\ee
Thus,
we see in view of (\ref{x1.3}) that there holds
\be\label{x3.5}
P'_i=
\sum_{j=1}^n W_{u_i u_j} P_j,\quad i=1,2,\cdots,n.
\ee

In order to establish (\ref{x3.3}), we need only to show that there is a point $x_0\in\bfR$ such that 
\be\label{x3.6}
P_i(x_0)=0,\quad i=1,2,\cdots,n,
\ee
because then we can apply the uniqueness theorem for the initial value problems of ordinary differential equations to the system (\ref{x3.5}) with unknowns $P_1,P_2,\cdots,P_n$ subject
to the initial condition (\ref{x3.6}) to infer (\ref{x3.3}).

We now show that (\ref{x3.6}) must be valid for some $x_0\in\bfR$. In fact, setting
\be \label{xx3.7}
Q=|P|^2 =\sum_{i=1}^n P_i^2,
\ee
we derive from (\ref{x3.5}) that
\bea\label{x3.10}
Q'(x)&=&2\sum_{i,j=1}^n P_i(x) W_{u_i u_j} (u(x))P_j(x)\nn\\
&\leq&-\lm Q(x),\quad x<x_0\equiv -a,
\eea
where $\lm>0$ is a constant and $a>0$ is sufficiently large. Here we have used the boundary condition (\ref{BC})
and the assumption that $(W_{u_i u_j}(u^1))$ is negative definite. From (\ref{x3.10}), we obtain
\be\label{x3.11}
Q(x)\geq Q(x_0) \e^{\lm(x_0-x)},\quad -\infty<x<x_0.
\ee
If $Q(x_0)\neq 0$, then (\ref{x3.11}) implies that $Q(x)\to\infty$ as $x\to-\infty$. However, applying the finite-energy condition to (\ref{x1.1}), we have
\be\label{x3.12}
\liminf_{x\to-\infty} \sum_{i=1}^n (u_i'(x))^2 =0;
\ee
applying the boundary condition (\ref{BC}), we have $(\nabla_u W(u))(x)\to (\nabla_u W)(u^1)={\bf 0}$ as $x\to-\infty$. As a consequence, we have
\be\label{x3.13}
\liminf_{x\to-\infty} Q(x)=0,
\ee
which is a contradiction. Thus $Q(x_0)=0$ and (\ref{x3.6}) is valid. Hence (\ref{x3.3}) is proved.

We next assume that $ (W_{u_i u_j}(u^1))$ is positive definite. 

Let $u$ be a solution of (\ref{x1.3}) satisfying (\ref{BC}). Then, since (\ref{x1.3}) is invariant under the change of independent variable, $x\mapsto -x$, we see that $u=u(-x)$ is a solution of (\ref{x1.3}) satisfying the reversed boundary
condition
\be
u(-\infty)=u^2,\quad u(\infty)=u^1.
\ee
Now set $P_i$ by (\ref{x3.2}) and $Q$ by (\ref{x3.7}). We get as before
\bea
Q'(x)&=&2\sum_{i,j=1}^n P_i(x) W_{u_i u_j} (u(x))P_j(x)\nn\\
&\geq&\lm Q(x),\quad x>x_0,\label{x3.14}
\eea
where $x_0>0$ is sufficiently large and $\lm>0$ is a constant. If $Q(x_0)>0$ then (\ref{x3.14}) gives us $Q(x)\geq
Q(x_0)\e^{\lm(x-x_0)}$ so that $Q(x)\to\infty$ as $x\to\infty$ which contradicts again with the finite-energy property.
Hence $Q(x)\equiv0$ as before.

Returning to the original independent variable, we see that the equations
\be
u_i'-W_{u_i}=0,\quad i=1,2,\dots,n,
\ee
are now fulfilled, which belong to the second branch of the BPS equations, that is, (\ref{x2.3}) with the lower sign.
\end{proof}

\section{Chern--Simons domain walls}
\setcounter{equation}{0}

In \cite{Lee} Lee {\em et al} formulated a Maxwell Chern--Simons gauge field theory which generalizes the classical
Abelian Higgs theory \cite{JT} with an added neutral scalar field. In \cite{Bolog} Bolognesi and Gudnason considered
a class of phase transition scenarios realized by domain walls through a dimensional reduction procedure. In this section,
we apply Theorem \ref{thm} to gain a mathematical understanding of such domain walls.

Following \cite{Bolog}, the tension or energy of the Chern--Simons domain wall is given by
\be\label{xx4.1}
E=\int_{-\infty}^\infty\left\{\phi'^2 +\frac12 N'^2+U(\phi,N)\right\}\,\dd x,
\ee
where $\phi$ and $N$ are two real-valued scalar fields with the interaction potential 
\be\label{4.2}
U(\phi,N)=\frac12\left(e\phi^2+\kappa N-e v^2\right)^2+e^2 N^2\phi^2,
\ee
in which $e,\kappa,v>0$ are coupling constants. To put this model into the formalism here, we set
\be
\phi=\frac1{\sqrt{2}}\psi.
\ee
Thus (\ref{xx4.1}) and (\ref{4.2}) assume their normalized forms
\be\label{4.4}
E=\int_{-\infty}^\infty\left\{\frac12\psi'^2 +\frac12 N'^2+V(\psi,N)\right\}\,\dd x,
\ee
where
\be\label{4.5}
V(\psi,N)=\frac12\left(\frac e2\psi^2+\kappa N-e v^2\right)^2+\frac12e^2 N^2\psi^2
=\frac12\left(W_\psi^2+W_N^2\right),
\ee
where the superpotential $W$ is given by
\be\label{4.6}
W(\psi,N)=\frac e2 N\psi^2+\frac1{2\kappa}(\kappa N-ev^2)^2.
\ee
The Euler--Lagrange equations of (\ref{4.4}) are
\bea
\psi''&=&e^2 N^2\psi+e\left(\frac e2\psi^2+\kappa N-ev^2\right)\psi,\label{4.7}\\
N''&=&e^2\psi^2 N+\kappa\left(\frac e2\psi^2+\kappa N-ev^2\right).\label{4.8}
\eea
It is clear that
there are exactly three domain phases in terms of $u=(\psi,N)$ realized as critical points of $W$:
\be\label{4.9}
u^1=\left(-\sqrt{2} v,0\right),\quad u^2=\left(\sqrt{2} v,0\right),\quad u^3=\left(0,\frac{ev^2}\kappa\right).
\ee

On the other hand, the Hessian matrix of the function (\ref{4.6}) is
\be
H(u)=\left(\begin{array}{cc}W_{\psi\psi}&W_{\psi N}\\W_{\psi N}& W_{NN}\end{array}\right)=\left(\begin{array}{cc}eN&e\psi\\
e\psi&\kappa\end{array}\right),
\ee
which is indefinite at $u^1$ and $u^2$ and positive definite at $u^3$. 
Thus in the context of the domain wall solutions linking
the phases between $u^1$ or $ u^2$ and $u^3$, the equations (\ref{4.7})--(\ref{4.8}) are equivalent to the BPS equations
\cite{Bolog}:
\be\label{y3.11}
\left(\psi',N'\right)=\mp\left(W_\psi,W_N\right)=\mp\left(eN\psi,\frac e2\psi^2+\kappa N-e v^2\right),
\ee
whose existence problem has been settled in \cite{ZL}. Following our earlier discussion
in Section 2, we see that the 
energy  (\ref{4.4}) of a finite-energy domain wall  realizing the phase transition between $u^1$ or $u^2$ and $u^3$, must be the BPS energy
\be
T_{\mbox{wall}}=E_{\mbox{BPS}}=\left|W\left(\pm\sqrt{2}v,0\right)-W\left(0,\,\frac{ev^2}\kappa\right)\right|=\frac{e^2 v^4}{2\kappa},
\ee
as obtained in \cite{Bolog}.

It is interesting to know whether in the context of domain wall solutions linking the domains $u^1$ and $u^2$ the equations
(\ref{4.7})--(\ref{4.8}) are equivalent to (\ref{y3.11}). It is easily seen that this is out of question since the system
(\ref{y3.11}) has no such solution. In fact, if $(\psi, N)$ is a solution, then there is some $x_0\in(-\infty,\infty)$ such that
$\psi(x_0)=0$. Using this as the initial condition in the equation $\psi'=\mp e N(x)\psi$ and applying the uniqueness theorem
for the initial value problem of ordinary differential equations, we get $\psi\equiv0$, which is a contradiction.

\section{A two scalar field model}
\setcounter{equation}{0}

We now consider the domain wall model given by the energy
\be\label{4.1}
E(\varphi,\chi)=\int_{-\infty}^\infty\left\{\frac12 \varphi'^2+\frac12\chi'^2+V(\varphi,\chi)\right\}\,\dd x,
\ee
and studied in \cite{1,Alon,2,3,de} governing two real scalar fields $\varphi$ and $\chi$ for which
\bea\label{x4.7}
V(\varphi,\chi)&=&\frac12(1-\varphi^2)^2+\frac12 r^2\chi^4-r\chi^2+r(1+2r)\varphi^2\chi^2
=\frac12W^2_\varphi+\frac12 W_\chi^2,\\
W(\varphi,\chi)&=&\varphi-\frac13\varphi^3 -r\varphi\chi^2,
\eea
where $r\neq0$ is a constant.  The equations of motion are
\bea
\frac12\varphi''&=&(\varphi^2+r[1+2r]\chi^2-1)\varphi,\label{y3.4}\\
\frac12\chi''&=&r(r\chi^2-1+[1+2r]\varphi^2)\chi.\label{y3.5}
\eea

There are two cases of interest regarding the ground states.
\begin{enumerate}
\item[(i)] $r<0$. In this situation the potential $V$ has two zeros in terms of $u=(\varphi,\chi)$:
\be\label{x4.8}
u^1=(-1,0),\quad u^2=(1,0),
\ee

\item[(ii)] $r>0$. Now in terms of $u=(\varphi,\chi)$ the potential $V$ has four zeros:
\be
u^1=(-1,0),\quad u^2=(1,0),\quad u^3=\left(0,\frac1{\sqrt{r}}\right),\quad u^4=\left(0,-\frac1{\sqrt{r}}\right).
\ee
\end{enumerate}

The Hessian matrix of the generating function $W$ is
\be
H(u)=\left(\begin{array}{cc} W_{\varphi\varphi}&W_{\varphi\chi}\\W_{\varphi\chi}&W_{\chi\chi}\end{array}\right)
=-2\left(\begin{array}{cc}\varphi&r\chi\\ r\chi& r\phi\end{array}\right),
\ee
which is indefinite when $r<0$ and Theorem \ref{thm} is not applicable. On the other hand, when $r>0$, we see that
$
H(u^1)
$ is positive definite, $H(u^2)$ is negative definite, and $H(u^3)$ and $H(u^4)$ are indefinite.
Therefore, applying Theorem \ref{thm},  we see that the second-order equations of motion (\ref{y3.4}) and (\ref{y3.5})
are equivalent to the first-order BPS equations,
\be\label{x4.9}
(\varphi',\chi')=\pm(\varphi^2+r\chi^2-1,2r\varphi\chi),\label{x4.10}
\ee
in the context of finite-energy solutions for which either $u^1$ or $u^2$ is an asymptotic state as $x\to-\infty$ or $x\to\infty$.

For $0<r<\frac12$, the  solution of (\ref{x4.9})  satisfying $(\varphi,\chi)\to u^1$ as $x\to-\infty$ and $(\varphi,\chi)\to u^2$ as
$x\to\infty$ has been found \cite{1,2,3} to be given by
\be\label{x4.11}
\varphi(x)=\tanh (2r [x-x_0]),\quad \chi(x)=\sqrt{\frac1r-2}\sech(2r [x-x_0]),\quad x_0\in\bfR,
\ee
through an integration. Using Theorem \ref{thm} we see that (\ref{x4.11}) gives us all the solutions
of the original system (\ref{y3.4})--(\ref{y3.5}) satisfying the boundary condition $u(-\infty)=u^1$ and
$u(\infty)=u^2$. In particular the energy of such a solution must be the minimum BPS energy
\be\label{x4.12}
E=W(u^2)-W({u^1})=\frac43.
\ee.

We now aim at obtaining the solutions of (\ref{y3.4})--(\ref{y3.5}) with $r>0$ which link the ground states $u^3$ and $u^4$. First note that $W(u^3)=W(u^4)=0$.  So the BPS bound vanishes, which becomes unattainable. Alternatively,
it is clear that (\ref{x4.9}) allows no solution to make transition between $u^3$ and $u^4$. In fact, if there is such a
solution, then $\chi$ interpolates between $-\frac1{\sqrt{r}}$ and $\frac1{\sqrt{r}}$. Thus there is some $x_0\in\bfR$
such that $\chi(x_0)=0$. Since $\chi$ satisfies the equation $\chi'=\pm 2r\varphi\chi$, we see that $\chi\equiv0$ in
view of the uniqueness theorem for the initial value problem of ordinary differential equations, which is a contradiction.

Thus we have to look for solutions of the full system of the second-order equations (\ref{y3.4})--(\ref{y3.5}) in order
to be able to realize a phase transition between $u^3$ and $u^4$.
However, it may be hard to get such solutions due to the complicated structure of the equations. Fortunately we can
use the ansatz
$\varphi=0$ to reduce the equations into the single equation
\be\label{x3.12}
\chi''=2r(r\chi^2-1)\chi,
\ee
with the simplified energy
\be
E(0,\chi)=\frac12\int_{-\infty}^\infty\left(\chi'^2+(r\chi^2-1)^2\right)\,\dd x,
\ee
and the associated superpotential $W(\chi)=\frac r3\chi^3-\chi$, which spells out our desired asymptotics $\chi^2(\pm\infty)=\frac1r$. Thus we have the lower bound
\bea\label{y3.14}
E(0,\chi)&=&\frac12\int_{-\infty}^\infty \left(\chi'\pm (r\chi^2-1)\right)^2\dd x\mp\int_{-\infty}^\infty \chi'(r\chi^2-1)\,\dd x\nn\\
&\geq&\left|W\left(\frac1{\sqrt{r}}\right)-W\left(-\frac1{\sqrt{r}}\right)\right|=\frac4{3\sqrt{r}},
\eea
which is attained when $\chi$ solves the BPS equation
\be\label{x3.15}
\chi'\pm (r\chi^2-1)=0.
\ee
Since $W''\left(\pm\frac1{\sqrt{r}}\right)=\pm 2\sqrt{r}\neq0$, we see that (\ref{x3.12}) is equivalent to (\ref{x3.15}),
which is a classical result, whose solutions are given by the formulas
\be\label{y3.16}
\chi(x)=\pm\frac1{\sqrt{r}}\tanh \sqrt{r}(x-x_0).
\ee

Thus we have obtained a family of solutions of the coupled equations (\ref{y3.4}) and (\ref{y3.5}), linking
the ground states $u^3$ and $u^4$ where the Hessian of the superpotential fails to be definite, which are not solutions of the
BPS equations (\ref{x4.9}).

Comparing (\ref{x4.12}) with (\ref{y3.14}), we see that, when $r>1$, the energy carried by the non-BPS solution
$\varphi=0$ and $\chi$ is as given in (\ref{y3.16}) linking $u^3$ and $u^4$ where the superpotential is indefinite assumes
a lower value than that of the BPS solution linking $u^1$ and $u^2$ where the superpotential is definite. This is a rather
unexpected result.

\medskip

As another application, we consider the three-scalar field domain wall model \cite{BLW} defined by the static energy
density
\be
{\cal E}=\frac12{\varphi'}^2+\frac12{\chi'}^2+\frac12{\rho'}^2+V(\varphi,\chi,\rho),
\ee
where
\be
V(\varphi,\chi,\rho)=\frac12(1-\varphi^2)^2+2r^2\varphi^2\chi^2+\frac12r^2 (\chi^2+\rho^2)^2+r(\varphi^2-1)
(\chi^2+\rho^2)+2r^2 (\varphi-s)^2\rho^2,
\ee
with $r>0$ and $s\in (-1,0)\cup(0,1)$ being two coupling parameters. The Euler--Lagrange equations are
\bea
\frac12\varphi''&=&\left(\varphi^2-1+r[2r+1]\chi^2+r\rho^2\right)\varphi+2r^2\rho^2(\varphi-s),\label{4.19}\\
\frac12\chi''&=&r\left([2r+1]\varphi^2 +r[\chi^2+\rho^2]-1\right)\chi,\label{4.20}\\
\frac12\rho''&=&r\left(r[\chi^2+\rho^2]+[\varphi^2-1]+2r[\varphi-s]^2\right)\rho.\label{4.21}
\eea
The superpotential is seen to be \cite{BLW}
\be\label{4.22}
W(\varphi,\chi,\rho)=\varphi-\frac13\varphi^2-r\varphi(\chi^2+\rho^2)+rs\rho^2.
\ee
With $u=(\varphi,\chi,\rho)$, there are exactly six ground states \cite{BLW}:
\be
u^{1,2}=(\pm1,0,0),\quad u^{3,4}=\left(0,\pm\sqrt{\frac1r},0\right),\quad u^{5,6}=\left(s,0,\mp\sqrt{\frac1r(1-s^2)}\right).
\ee

Besides, from (\ref{4.22}), we can read off the BPS equations,
\be\label{4.24}
\left(\varphi',\chi',\rho'\right)=\pm\left(\varphi^2-1+r(\chi^2+\rho^2),2r\varphi\chi,2r(\varphi-s)\rho\right).
\ee

Furthermore, the Hessian of $W$ is
\be
H(u)=\left(\begin{array}{ccc}W_{\varphi\varphi}&W_{\varphi\chi}&W_{\varphi\rho}\\W_{\varphi\chi}&W_{\chi\chi}&W_{\chi\rho}\\W_{\varphi\rho}&W_{\chi\rho}&W_{\rho\rho}\end{array}\right)=-2\left(\begin{array}{ccc}\varphi&r\chi&r\rho\\ r\chi& r\varphi & 0\\ r\rho&0&r(\varphi-s)\end{array}\right),
\ee
which gives us the results
$
H(u^{1,2})=\mp 2\mbox{Diag}\{1,r,r(1\mp s)\},
$
which is definite,
\be
H(u^{3,4})=2\left(\begin{array}{ccc}0&\mp \sqrt{r}&0\\ \mp\sqrt{r}&0&0\\0&0& rs\end{array}\right),
\ee
which is indefinite, and
\be
H(u^{5,6})=-2\left(\begin{array}{ccc}s&0&\mp\sqrt{r(1-s^2)}\\0&rs&0\\\mp\sqrt{r(1-s^2)}&0&0\end{array}\right),
\ee
which is also indefinite. From these results we can
apply Theorem \ref{thm} to immediately conclude that, a domain wall connecting $u^a$ and $u^b$ with
\be
a=1,2,\quad b=1,\dots,6,
\ee
is necessarily BPS. If $a=1,2, b=1,2$, we see from Theorem \ref{thm} that all solutions to (\ref{4.19})--(\ref{4.21})
are contained in the set of solutions of (\ref{4.24}). Hence we may conclude with $\chi\equiv0,\rho\equiv0$ by virtue of
the boundary conditions on $\chi$ and $\rho$.
In other words,  we
see that in this case the only domain wall solutions are given by those of the single scalar field model
\be
{\cal E}=\frac12{\varphi'}^2+\frac12(\varphi^2-1)^2.
\ee
If $a=1,2$ and $b=3,4$,  we apply Theorem \ref{thm} to conclude that all solutions to (\ref{4.19})--(\ref{4.21})
are the solutions to (\ref{4.24}). Thus we infer $\rho\equiv0$ and we see that in this case the
only domain wall solutions are those given by the two-scalar field model (\ref{4.1}) studied earlier. If $a=1,2$ and $b=5,6$,
we again apply Theorem \ref{thm} to see that all solutions to (\ref{4.19})--(\ref{4.21})
are those to (\ref{4.24}). Hence $\chi\equiv0$. Thus we see that in this case the domain wall solutions
are those of the two-scalar field model governed by the energy density
\be
{\cal E}=\frac12{\varphi'}^2+\frac12{\rho'}^2
+\frac12(1-\varphi^2)^2+\frac12r^2 \rho^4+r(\varphi^2-1)
\rho^2+2r^2 (\varphi-s)^2\rho^2.\label{4.30}
\ee
The superpotential of the model (\ref{4.30}) is
\be\label{4.31}
W(\varphi,\rho)=\varphi-\frac13\varphi^3-r(\varphi-s)\rho^2,
\ee
which leads to the BPS equations
\be\label{4.32}
(\varphi',\rho')=\pm\left(1-\varphi^2-r\rho^2,-2r(\varphi-s)\rho\right).
\ee
The ground states in terms of $v=(\varphi,\rho)$ are
\be
v^{1,2}=(\pm1,0),\quad v^{3,4}=\left(s,\pm\sqrt{\frac1r(1-s^2)}\right).
\ee
The Hessian of the superpotential (\ref{4.31}) is
\be
H(v)=-2 \left(\begin{array}{cc}\varphi&r\rho\\r\rho&r(\varphi-s)\end{array}\right)
\ee
so that $H(v)$ is definite for $v=v^{1,2}$ but indefinite for $v=v^{3,4}$. Thus Theorem \ref{thm} ensures that all finite-energy solutions of the equations of motion of (\ref{4.30}), that is,
\bea
\frac12\varphi''&=&(\varphi^2-1)\varphi+r\rho^2\varphi+2r^2\rho^2(\varphi-s),\label{4.35}\\
\frac12\rho''&=&\left(r^2\rho^2+r(\varphi^2-1)+2r^2(\varphi-s)\right)\rho,\label{4.36}
\eea
 linking $v^a$ ($a=1,2$) to $v^b$ ($b=1,\cdots,4$), are given
by those of the BPS equations (\ref{4.32}), the latter have already been obtained in \cite{BLW}. On the other hand, it is
clear that $W(v^3)=W(v^4)$ and
the BPS equations (\ref{4.32}) have no solution linking $v^3$ and $v^4$.  

The equations (\ref{4.35})--(\ref{4.36})
do not permit a further ansatz with $\varphi\equiv s$. At this moment, a domain-wall solution
of  (\ref{4.35})--(\ref{4.36}) linking $v^3$ and $v^4$ remains unknown.

\section{Domain walls in a supersymmetric gauge theory and more examples of non-BPS solutions}
\setcounter{equation}{0}

In \cite{Witten} Witten proposed a product Abelian Higgs model hosting superconducting strings which are relevant to 
cosmology. In \cite{Morris} Morris came up with a supersymmetric extension of the model of Witten \cite{Witten}. In
\cite{Bur} Burinskii showed that the Morris model \cite{Morris} may be adapted to 
give rise to a bag-like superconducting core
in the Kerr--Newmann metric and a BPS domain wall appears to separate the underlying supersymmetric vacuum states.
In this section we illustrate
the limitation as well as applicability of Theorem \ref{thm} in understanding such supersymmetric domain walls.

Recall that in \cite{Bur} the reduced energy density in normalized units, governing the static configuration of a triplet of 
real-valued scalar superfields, $Z,\Sigma, \Phi$, is
\be\label{3.1}
{\cal E}=
\frac12 (Z')^2+\frac12(\Sigma')^2+\frac12(\Phi')^2+V(Z,\Sigma,\Phi),
\ee
where the potential  $V$ assumes the form
\be\label{7}
V(Z,\Sigma,\Phi)=\frac{\lm^2}2\left\{\frac14(\Sigma^2+\Phi^2-\eta^2)^2+Z^2\Sigma^2+\bigg(Z+\frac m\lm\bigg)^2
\Phi^2\right\},
\ee
with $m,\eta,\lm>0$ being the coupling parameters. It may be examined that (\ref{7}) is of the type (\ref{x1.2}), 
generated from a superpotential $W$:
\be
V=\frac12\left(W_Z^2+W_\Sigma^2+W_\Phi^2\right),\quad
 W=\frac\lm2(\eta^2-\Sigma^2-\Phi^2)Z-\frac m2\Phi^2.\label{y5.4}
\ee
The Euler--Lagrange equations of (\ref{3.1}) are
\bea
Z''&=&\lm^2\left(\Sigma^2 Z+\Phi^2\left[Z+\frac m\lm\right]\right),\label{a1}\\
\Sigma''&=&\frac{\lm^2}2(\Sigma^2+\Phi^2-\eta^2)\Sigma+\lm^2 Z^2\Sigma,\label{a2}\\
\Phi''&=&\frac{\lm^2}2(\Sigma^2+\Phi^2-\eta^2)\Phi+\lm^2\left(Z+\frac m\lm\right)^2\Phi.\label{a3}
\eea
In terms of $u=(Z,\Sigma,\Phi)$ we see that there are exactly four domains which are the critical points of $W$:
\be\label{y5.8}
u^1=(0,-\eta,0),\quad u^2=(0,\eta,0),\quad u^3=\left(-\frac m\lm,0,-\eta\right),\quad u^4=\left(-\frac m\lm,0,\eta\right).
\ee

The Hessian matrix of the superpotential $W$ is
\be\label{y5.9}
H(u)=
\left(\begin{array}{ccc}W_{ZZ}&W_{Z\Sigma}&W_{Z\Phi}\\W_{Z\Sigma}&W_{\Sigma\Sigma}&W_{\Sigma\Phi}\\W_{Z\Phi}&W_{\Sigma\Phi}&
W_{\Phi\Phi}\end{array}\right)=-\left(\begin{array}{cccc}0&\lm\Sigma&\lm\Phi\\\lm\Sigma&\lm Z&0\\\lm\Phi&0&\lm Z+m\end{array}\right),
\ee
which can never be definite anywhere.

The BPS system of (\ref{3.1}) reads
\be\label{y5.10}
(Z',\Sigma',\Phi')=\pm(W_Z,W_\Sigma,W_\Phi)=\mp\left(\frac\lm2 (\Sigma^2+\Phi^2-\eta^2),\lm Z\Sigma,
(\lm Z+m)\Phi\right).
\ee
Due to the indefiniteness of (\ref{y5.9}), Theorem \ref{thm} is not applicable to establish the equivalence of the
Euler--Lagrange equations (\ref{a1})--(\ref{a3}) and (\ref{y5.10}). In fact we can find domain wall solutions of
(\ref{a1})--(\ref{a3}) linking certain two domains given in (\ref{y5.8}) which do not satisfy the BPS equations (\ref{y5.10}).

Indeed, it is clear that if we look for a solution  $u=(Z,\Sigma,\Phi)$ of (\ref{a1})--(\ref{a3})  satisfying 
the boundary condition
\be\label{y5.11}
u(\pm\infty)=u^{3,4},
\ee
we may use the ansatz
\be\label{y5.12}
Z\equiv-\frac m\lm,\quad \Sigma\equiv0,\quad \Phi=\mbox{unknown},
\ee
which is not permissible in (\ref{y5.10}).
Then the equations (\ref{a1})--(\ref{a3}), subject to (\ref{y5.11}),  are reduced into the single equation
\be\label{y5.13}
\Phi''=\frac{\lm^2}2(\Phi^2-\eta^2)\Phi,
\ee
subject to $\Phi(\pm\infty)=\pm\eta$, so that the associated energy enjoys the lower bound
\bea
E&=&\int_{-\infty}^\infty {\cal E}\left(-\frac m\lm,0,\Phi\right)\,\dd x=\int_{-\infty}^\infty\left\{\frac12(\Phi')^2+\frac{\lm^2}8
(\Phi^2-\eta^2)^2\right\}\,\dd x\nn\\
&=&\frac12\int_{-\infty}^\infty\left(\Phi'\pm\frac\lm2(\Phi^2-\eta^2)\right)^2\,\dd x\mp\frac\lm 2\int_{-\infty}^\infty
(\Phi^2-\eta^2)\Phi'\,\dd x\nn\\
&\geq&\frac23\lm\eta^3,
\eea
which is attained when $\Phi$ satisfies the BPS equation
\be\label{y5.15}
\Phi'\pm\frac\lm2(\Phi^2-\eta^2)=0.
\ee
Since the superpotential $W(\Phi)=\frac\lm2\left(\frac13\Phi^3-\eta^2\Phi\right)$ of the reduced energy
satisfies $W''(\pm\eta)=\pm\lm\eta\neq0$, we can use Theorem \ref{thm} to infer that (\ref{y5.13}) and (\ref{y5.15})
are equivalent, which is a well known, classic, fact. The equation (\ref{y5.15}) can be easily integrated to give us the solution
\be
\Phi(x)=\pm\eta\tanh\left(\frac{\lm\eta}2(x-x_0)\right).
\ee

However, the BPS system (\ref{y5.10}) subject to the boundary condition (\ref{y5.11})  has no solution under the ansatz
(\ref{y5.12})
as already observed. In other words we have again obtained some non-BPS solutions of (\ref{a1})--(\ref{a3})  subject to (\ref{y5.11}).

Similarly we can show that (\ref{a1})--(\ref{a3}) subject to the boundary condition
\be
u=(\pm\infty)=u^{1,2}
\ee
allow solutions with the ansatz
\be
Z\equiv 0,\quad \Sigma=\mbox{unknown},\quad \Phi\equiv0,
\ee
but the system (\ref{y5.10}) does not allow such solutions. Thus the existence of non-BPS solutions again follows.

Note that for the superpotential $W(u)$ defined in (\ref{y5.4}), we have $W(u^1)=W(u^2)=0$ and $W(u^3)=W(u^4)=
-\frac m2\eta^2$. Hence the BPS bound vanishes for the field configurations linking both $u^1$ and $u^2$, and, $u^3$ and $u^4$, although the energy of the solutions of (\ref{a1})--(\ref{a3}) obtained above for both the
cases is
\be
E=\frac23\lm\eta^3,
\ee
which is strictly positive. Thus the BPS bound in these cases is not attainable and there are domain walls of energy above
the BPS bound.

\section{Existence of supersymmetric BPS domain walls}
\setcounter{equation}{0}

On the other hand, the BPS bound in the Burinskii--Morris model \cite{Bur,Morris} for the field configurations linking $u^{1,2}$,
referred to in \cite{Bur} as the supersymmetric vacuum state I,  and $u^{3,4}$, referred to in \cite{Bur} 
as supersymmetric
vacuum state II, is positive:
\be
\left|W(u^{1,2})-W(u^{3,4})\right|=\frac12 m\eta^2,
\ee
which prompts the question whether the BPS system (\ref{y5.10}) permits solutions, which will be studied in this section.
 Since (\ref{y5.10}) cannot be integrated, we construct its solutions by analytic means.

Without loss of generality, we consider the upper sign situation of the system (\ref{y5.10}) subject to the boundary condition
\be\label{ubc}
u(-\infty)=u^4,\quad u(\infty)=u^2.
\ee
 For convenience, using the new variables, 
\be\label{8}
Z=\frac m\lm h,\quad \Sigma =\eta f,\quad \Phi=\eta g,
\ee
we obtain the rescaled equations and the corresponding boundary condition,
\be\label{10}
\left(mh',f',g'\right)=-\left(\beta (f^2-1)+\beta g^2, mfh,m(h+1)g\right),
\ee
and
\be\label{11}
(h,f,g)(-\infty)=(-1,0,1),\quad (h,f,g)(\infty)=(0,1,0).
\ee
where 
$
\beta=\frac12\lm^2\eta^2.
$

Applying the uniqueness theorem for the initial value problem of ordinary differential
equations, we see that the function $f$ and $g$ can never vanish. Hence, we may assume
$f(x)>0$ and $g(x)>0$ for all $x\in(-\infty,\infty)$. In view of this fact and the
system (\ref{10}), we have
\be\label{16}
\ln g(x)=\ln f(x)-mx +k,
\ee
where $k$ is an arbitrary constant, which can be translated away by the independent variable
$x$. Hence, we may well assume $k=0$ in (\ref{16}), which results in the relation
\be\label{17}
g(x)= f(x)\e^{-mx}.
\ee

Inserting the second relation, $f'=-mfh$, in (\ref{10}) and (\ref{17}) into the first relation in (\ref{10}), we arrive at the scalar equation
\be\label{18}
(\ln f)''=\beta (f^2-1)+\beta f^2 \e^{-2m x}.
\ee

For convenience, we may introduce the new variable 
\be\label{19}
u=2\ln f.
\ee
Therefore (\ref{18}) becomes a Liouville type equation in dimension one,
\be\label{20}
u''=\Lambda (\e^u-1)+\Lambda \e^{-2m x+u},
\ee
where
\be\label{21}
\Lambda=2\beta=\lm^2\eta^2.
\ee

In view of the boundary condition (\ref{11}) and the relation (\ref{19}), we see that we have the following boundary condition for $u$:
\be\label{22}
u(-\infty)=-\infty,\quad u(\infty)=0.
\ee

Thus, we are to solve the two-point boundary value problem consisting of (\ref{20}) and
(\ref{22}) over the full interval $(-\infty,\infty)$. It is clear that any solution $u$ of (\ref{20}) satisfying (\ref{22}) must be
negative-valued. In fact, using (\ref{22}), we see that if $u\geq0$ somewhere then there exists a point $x_0\in\bfR$
where $u$ attains its global maximum in $\bfR$. Thus $u''(x_0)\leq0$ which leads to a contradiction in view of (\ref{20}).

In order to solve this problem, we shall use a dynamical shooting technique. For this purpose,
we consider, instead, the initial value problem:
\bea
u''&=&\Lambda (\e^u-1)+\Lambda \e^{-2m x +u},\quad -\infty<x<\infty,\label{23}\\
u(0)&=&-a, \quad u'(0)=b.\label{24}
\eea

Recall that we are interested in negative solutions. So we assume $a>0$ in (\ref{24}).

\begin{proposition}\label{Proposition1}
For any $a>0$, there is a unique $b>0$, so that the solution of (\ref{23}) and (\ref{24})
 satisfies $u(\infty)=0$.
\end{proposition}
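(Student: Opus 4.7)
The plan is a standard shooting argument on the initial velocity $b>0$. For each $b\ge 0$, the Cauchy problem (\ref{23})--(\ref{24}) has a unique maximal solution $u_b$ on some interval $[0,T_b)$ depending continuously on $b$; while $u_b\le 0$ the pointwise bound $u_b''\ge-\Lambda$ rules out finite-time blow-up, so if $u_b$ never reaches $0$ then $T_b=\infty$ and $u_b$ remains bounded. Because the forcing $\Lambda e^{-2mx+u_b}$ decays exponentially in $x$, the large-$x$ dynamics is governed by the autonomous limit $u''=\Lambda(e^u-1)$, whose phase portrait near $u=0$ is that of a nondegenerate saddle with a one-dimensional stable manifold entering from below; in particular any orbit $u_b$ that is defined and non-positive on $[0,\infty)$ must tend either to $0$ or to $-\infty$.

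This motivates a trichotomy: let $\mathcal{A}$ be the set of $b>0$ for which $u_b$ reaches $0$ in finite forward time, $\mathcal{B}$ the set of $b>0$ for which $u_b<0$ on $[0,\infty)$ and $u_b(x)\to-\infty$, and $\mathcal{C}$ the set of $b>0$ for which $u_b<0$ on $[0,\infty)$ and $u_b(x)\to 0$; these partition $(0,\infty)$. I would then check that $\mathcal{A}$ and $\mathcal{B}$ are open via continuous dependence and a transversality check at their defining events, and show both are non-empty. Non-emptiness of $\mathcal{A}$ is immediate: integrating $u_b''\ge-\Lambda$ yields $u_b(x)\ge-a+bx-\frac{\Lambda}{2}x^2$, which crosses $0$ whenever $b>\sqrt{2a\Lambda}$. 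For non-emptiness of $\mathcal{B}$ I would exploit the non-decreasing perturbed energy
\[
H(x)=\frac{1}{2}(u_b'(x))^2-\Lambda\bigl(e^{u_b(x)}-u_b(x)\bigr)-\Lambda e^{-2mx+u_b(x)},\qquad H'(x)=2m\Lambda\,e^{-2mx+u_b(x)}\ge 0,
\]
together with the impulse bound $H(\infty)-H(0)\le\int_0^\infty 2m\Lambda e^{-2ms}\,ds=\Lambda$, valid as long as $u_b\le 0$: choosing $b$ sufficiently small makes $H(\infty)$ strictly less than the saddle-level value $-\Lambda$, which precludes $u_b(x)\to 0$ and, by the trichotomy, forces $u_b\to-\infty$. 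Connectedness of $(0,\infty)$ then gives $\mathcal{C}\ne\emptyset$, and for each $b\in\mathcal{C}$ the orbit lies on the local stable manifold of the saddle so that $u_b(x)\to 0$.

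For uniqueness I would suppose $b_1<b_2$ both lie in $\mathcal{C}$ and set $w=u_{b_2}-u_{b_1}$. Applying the mean value theorem to $u\mapsto e^u$ gives
\[
w''=\Lambda(1+e^{-2mx})\,e^{\xi(x)}\,w,\qquad w(0)=0,\quad w'(0)=b_2-b_1>0,
\]
for some $\xi(x)$ between $u_{b_1}(x)$ and $u_{b_2}(x)$. Wherever $w>0$ this gives $w''>0$, so $w$ is strictly convex; combined with $w(0)=0$ and $w'(0)>0$ the tangent-line support inequality yields $w(x)\ge(b_2-b_1)\,x$ for all $x\ge 0$, contradicting $w(x)\to 0$ as $x\to\infty$.

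The main obstacle will be the non-emptiness of $\mathcal{B}$: the non-autonomous forcing $\Lambda e^{-2mx+u}$ is always upward, and when $a$ is small the autonomous downward pull near $u=-a$ is weak, so a sharp use of the integrable total impulse carried by this forcing, in tandem with the monotonicity of $H$, is required to produce a small $b>0$ whose solution is driven decisively into case (II). Once such a $b$ is in hand, the remaining steps are routine.
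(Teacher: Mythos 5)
Your overall plan --- shooting in $b$, a three-set decomposition of the parameter line, connectedness to produce the good $b$, and a convexity/maximum-principle argument for uniqueness --- is the same skeleton the paper uses, and two of your steps are solid: the non-emptiness of $\mathcal{A}$ via $u\ge -a+bx-\frac{\Lambda}{2}x^{2}$ matches the paper's treatment of its overshooting set, and your uniqueness argument (where $w=u_{b_2}-u_{b_1}>0$ one has $w''=\Lambda(1+\e^{-2mx})\e^{\xi}w>0$, hence $w\ge (b_2-b_1)x$, contradicting $w(\infty)=0$) is correct and is essentially the paper's maximum-principle step made explicit. The difficulty is your choice of trichotomy, which classifies orbits by their asymptotic fate rather than by a finite-time event, and this creates two genuine gaps. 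First, exhaustiveness: you assert, by appeal to the stable manifold of the autonomous limit $u''=\Lambda(\e^{u}-1)$, that every orbit remaining $\le 0$ on $[0,\infty)$ tends to $0$ or to $-\infty$; but the forcing $\Lambda\e^{-2mx+u}$ is always upward and you have not excluded orbits whose derivative changes sign repeatedly, so the claim that $\{\mathcal{A},\mathcal{B},\mathcal{C}\}$ partitions $(0,\infty)$ is unproved. Second, $\mathcal{B}$ is defined by the condition $u\to-\infty$, which is not a finite-time transversal event, so its openness does not follow from continuous dependence as stated; and your energy argument for its non-emptiness does not close. Indeed, from $H'=2m\Lambda\e^{-2mx+u}$ and $u\le 0$ one gets $H(\infty)\le H(0)+\Lambda$ with $H(0)=\frac12 b^{2}-\Lambda a-2\Lambda\e^{-a}$, so the necessary condition for $u\to 0$ is $\frac12 b^{2}\ge \Lambda\left(a-2+2\e^{-a}\right)$; the right-hand side is \emph{negative} for all $a$ below some $a^{*}>\ln 2$, so for small $a$ this excludes no $b>0$ whatsoever and cannot place any $b$ in $\mathcal{B}$. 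You flag this yourself as the main obstacle, and as written it is a real one, not a technicality.

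The paper sidesteps all three issues by classifying according to the sign of $u'$ instead of the fate of $u$: its undershooting set is $\{b:\ u'(x)<0\ \mbox{for some}\ x>0\}$, which is open for free (a finite-time open condition) and nonempty for free (it contains every $b<0$, since $u'(0)=b$), and its overshooting set is $\{b:\ u'>0\ \mbox{throughout and}\ u>0\ \mbox{somewhere}\}$. For $b$ in the residual set one automatically has $u'>0$ and $u\le 0$ everywhere, so $u$ increases to a limit $u_{0}\le 0$, and $u_{0}<0$ is impossible because then $u''\le\frac12\Lambda(\e^{u_{0}}-1)<0$ for large $x$, which would force $u'$ to become negative. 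Thus $u(\infty)=0$ follows from monotone convergence, with no stable-manifold analysis, no energy functional, and no control of undershooting asymptotics required. I recommend you adopt that classification; your arguments for $\mathcal{A}$ and for uniqueness then carry over unchanged, and the gaps above disappear.
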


In order to prove this proposition, we define
\bea
{\cal B}^-&=&\{b\in\bfR\,|\, \exists x>0 \mbox{ so that }u'(x)<0\},\nn\\
{\cal B}^0&=&\{b\in\bfR\,|\, u'(x)>0\mbox{ and } u(x)\leq0\mbox{ for all }x\geq0\},\nn\\
{\cal B}^+&=&\{b\in\bfR\,|\,u'(x)>0\mbox{ for all } x\geq0\mbox{ and } u(x)>0\mbox{ for some }
x>0\}.\nn
\eea

Note that in the above definitions of the sets $\cal B$'s, the solution
$u(x)$ need not exist for all $x$ and 
the statements should actually be read as for all $x$ where $u(x)$ exists.

\begin{lemma}\label{Lemma1}
The real line $\bfR$ is the disjoint union of the sets ${\cal B}^-,{\cal B}^0,{\cal B}^+$.
In particular, if $b\not\in{\cal B}^-$, then $u'(x)>0$ for all $x>0$ in the interval of
existence of the solution.
\end{lemma}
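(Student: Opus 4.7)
The plan is to separate the combinatorial (disjointness) and the analytic (strict positivity) components of the assertion. Disjointness is immediate from the definitions: both ${\cal B}^0$ and ${\cal B}^+$ require $u'(x)>0$ for every $x>0$ and therefore cannot intersect ${\cal B}^-$, while ${\cal B}^0$ demands $u(x)\le 0$ throughout $[0,\infty)$ whereas ${\cal B}^+$ demands $u(x)>0$ somewhere in $(0,\infty)$, so ${\cal B}^0\cap{\cal B}^+=\emptyset$ as well. Covering then reduces to the second assertion of the lemma: if $b\notin{\cal B}^-$, i.e.\ $u'(x)\ge 0$ on the interval of existence restricted to $x>0$, then in fact $u'(x)>0$ strictly. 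Once this is in hand, splitting according to whether $u$ ever exceeds $0$ partitions the complement of ${\cal B}^-$ into ${\cal B}^0\cup{\cal B}^+$.

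I will prove the strict positivity by contradiction. Suppose some $x_0>0$ satisfies $u'(x_0)=0$ while $u'\ge 0$ on a neighborhood of $x_0$. Since the right-hand side of (\ref{23}) is smooth in $(x,u)$, the solution $u$ is smooth on its interval of existence, and $x_0$ is an interior local minimum of the smooth function $u'$. Consequently $u''(x_0)=0$. Substituting $u'(x_0)=u''(x_0)=0$ back into (\ref{23}) produces the identity $e^{u(x_0)}\bigl(1+e^{-2mx_0}\bigr)=1$, which in particular forces $u(x_0)<0$.

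The decisive step is to differentiate the ODE once more. A direct computation gives
\be
u'''(x)=\Lambda u'(x)\,e^{u(x)}+\Lambda\bigl(u'(x)-2m\bigr)\,e^{-2mx+u(x)},\nn
\ee
so that, at $x_0$, one has $u'''(x_0)=-2m\Lambda\,e^{-2mx_0+u(x_0)}<0$. Thus $u''$ is strictly decreasing through zero at $x_0$, forcing $u''(x)>0$ on some left neighborhood of $x_0$; combined with $u'(x_0)=0$ this makes $u'(x)<0$ just to the left of $x_0$, contradicting the hypothesis $u'\ge 0$ on $(0,\infty)$.

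I expect the main obstacle to be recognizing what makes this argument go through, namely the explicit $x$-dependence contributed by the inhomogeneous term $\Lambda e^{-2mx+u}$: it is precisely this non-autonomous term that survives in $u'''(x_0)$ and yields the essential factor $-2m$, ruling out the possibility of $u'$ grazing zero while remaining nonnegative. In an autonomous analogue the third-derivative test would be inconclusive and a more delicate phase-plane argument would be required.
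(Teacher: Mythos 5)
Your proof is correct and takes essentially the same route as the paper: both deduce $u''(x_0)=0$ at a putative interior zero of $u'$ and then differentiate the ODE to get $u'''(x_0)=-2m\Lambda \e^{-2mx_0+u(x_0)}<0$, the non-autonomous term being exactly what makes the third-derivative test decisive. The only cosmetic differences are that you draw the contradiction from the left neighborhood of $x_0$ where the paper uses the right, and you include a correct but unneeded observation that $u(x_0)<0$.
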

\begin{proof} 
If $b\not\in {\cal B}^-$, then $u'(x)\geq0$ for all $x>0$ in the interval of existence of
the solution $u$. If there is a point $x_0>0$ in the
interval of existence of $u$ so that $u'(x_0)=0$, then $u''(x_0)=0$ because otherwise
$u'(x)<0$ for $x$ close to $x_0$ but $x<x_0$ if $u''(x_0)>0$ or $x>x_0$ if
$u''(x_0)<0$ which contradicts the assumption that $b\not\in{\cal B}^-$.

Besides, differentiating (\ref{23}), we have
\bea\label{25}
u'''(x_0)&=&\Lambda \e^{u(x_0)}u'(x_0)+\Lambda \e^{-2m x_0 +u(x_0)}\left(-2m +u'(x_0)\right)\nn\\
&=&-2m\Lambda \e^{-2m x_0
+u(x_0)}<0.
\eea
Hence, using $u''(x_0)=0$ and (\ref{25}), we see
that $u''(x)<0$ for $x>x_0$ but $z$ is close to $x_0$. Combining this fact with the
assumption $u'(x_0)=0$, we derive $u'(x)<0$ for $x>x_0$ but $x$ is close to $x_0$, which
again contradicts $b\not\in{\cal B}^-$.

In other words, we have shown that $b\not\in{\cal B}^-$ implies $u'(x)>0$ for all $x$.
Therefore, $b\in {\cal B}^0\cup{\cal B}^+$.
\end{proof}

\begin{lemma}\label{Lemma2}
The sets ${\cal B}^-$ and ${\cal B}^+$ are both open and nonempty.
\end{lemma}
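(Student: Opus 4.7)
The plan is to handle the two sets in parallel, treating openness via continuous dependence on initial data plus a forward-invariance bootstrap, and nonemptiness by choosing extreme values of $b$.

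For openness of $\mathcal{B}^-$, the argument is essentially immediate. If $b \in \mathcal{B}^-$, then there exists $x_0 > 0$ in the interval of existence with $u'(x_0) < 0$. Continuous dependence of the solution of (\ref{23})--(\ref{24}) on the initial slope $b$ guarantees that for $b'$ sufficiently close to $b$, the solution $u^{b'}$ still exists up to some point past $x_0$ and $u'^{b'}(x_0)$ is close to $u'^b(x_0)$, hence still negative. Therefore $b' \in \mathcal{B}^-$.

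For openness of $\mathcal{B}^+$, suppose $b \in \mathcal{B}^+$ with $u^b(x_0) > 0$ for some $x_0 > 0$; by Lemma \ref{Lemma1}, $u'^b > 0$ on $[0,x_0]$, so $\min_{[0,x_0]} u'^b > 0$. Continuous dependence then supplies some neighborhood of $b$ on which $u^{b'}$ exists on $[0,x_0]$, $u'^{b'} > 0$ on $[0,x_0]$, and $u^{b'}(x_0) > 0$. The key closing step is a monotone bootstrap past $x_0$: whenever $u^{b'}(x) > 0$, the right-hand side of (\ref{23}) is strictly positive, hence $u''^{b'}(x) > 0$. A standard first-point-of-failure argument then shows that neither $u^{b'}$ nor $u'^{b'}$ can touch zero for $x > x_0$, because at any such candidate point the sign of $u''^{b'}$ contradicts the assumed crossing. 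Thus $b' \in \mathcal{B}^+$.

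Nonemptiness of $\mathcal{B}^-$ is cheap: any $b < 0$ gives $u'(0) = b < 0$, so by continuity $u'(x) < 0$ on a right neighborhood of $0$, whence $b \in \mathcal{B}^-$. Nonemptiness of $\mathcal{B}^+$ is the genuine work, and will be the main obstacle. The idea is to take $b$ very large and compete the linear kick $bx$ against the bounded right-hand side of (\ref{23}) on a short interval. Concretely, I would a priori confine the solution to a strip where $|u| \leq 2a$ on some interval $[0,\delta]$ with $\delta = O(a/b)$, derive from (\ref{23}) a uniform bound $|u''| \leq C$ on that strip depending only on $a, m, \Lambda$, then integrate to get $u'(x) \geq b - Cx$ and $u(x) \geq -a + bx - \tfrac{1}{2}Cx^2$. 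For $b$ sufficiently large these bounds are self-consistent and force $u$ to cross zero at some $x_1 \in (0,\delta)$ while $u' > 0$ throughout $[0,x_1]$. Once $u^b(x_1) > 0$ and $u'^b(x_1) > 0$, the same bootstrap used in the openness argument keeps both quantities strictly positive for all larger $x$ in the existence interval, placing $b$ in $\mathcal{B}^+$.

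The main obstacle, as indicated, is the quantitative $\mathcal{B}^+$ estimate: one must close a short-time bootstrap for the nonlinear ODE (\ref{23}) with the awkward $e^{-2mx+u}$ forcing term, using only the size of $b$, without any monotonicity assumption on $u''$. The openness arguments and the $\mathcal{B}^-$ nonemptiness are essentially soft.
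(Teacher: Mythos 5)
Your proposal is correct, and its overall architecture (continuous dependence for openness, a forward-invariance bootstrap once $u>0$ and $u'>0$, negative $b$ for ${\cal B}^-\neq\emptyset$, large $b$ for ${\cal B}^+\neq\emptyset$) coincides with the paper's. The one place you diverge is precisely the step you flag as ``the main obstacle,'' and there the paper's route is simpler than yours: instead of confining the solution to a strip $|u|\leq 2a$ to get a two-sided bound $|u''|\leq C$ and then closing a short-time bootstrap, the paper integrates (\ref{23}) and uses only the \emph{unconditional} one-sided bound $(\e^{u}-1)+\e^{-2ms+u}\geq -1$, valid for all $u$ and all $s\geq 0$. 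This gives $u'(x)\geq b-\Lambda x$ and $u(x)\geq -a+bx-\tfrac12\Lambda x^2$ on the whole interval of existence with no a priori confinement needed, so choosing $b$ with $b-\Lambda x_0>0$ and $-a+bx_0-\tfrac12\Lambda x_0^2>0$ immediately yields $b\in{\cal B}^+$ (the paper also notes that if the solution blows up before $x_0$, monotonicity forces $u$ to become positive anyway, a case your strip argument handles implicitly since the exit from the strip must occur through the upper boundary). Your version works but does extra bookkeeping; the lesson is that only a lower bound on $u''$ is needed, and that bound is global.
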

\begin{proof}
The statement for ${\cal B}^-$ is obvious since $(-\infty,0)\subset{\cal B}^-$. For the
statement concerning ${\cal B}^+$, we integrate (\ref{23}) to get
\bea
u'(x)&=&b+\Lambda \int_0^x\left\{(\e^{u(s)}-1)+\e^{-2ms+u(s)}\right\}\,\dd s,\label{26}\\
u(x)&=&-a+bx+\Lambda \int_0^x\int_0^t\left\{(\e^{u(s)}-1)+\e^{-2ms+u(s)}\right\}\,\dd s\dd t.\label{27}
\eea
For any $x_0$, we can choose $b>0$ sufficiently large so that
\bea
b-\Lambda x_0>0,\label{28}\\
-a+bx_0-\frac12\Lambda x_0^2>0.\label{29}
\eea
We prove that if $b$ satisfies (\ref{28}) and (\ref{29}), then $b\in
{\cal B}^+$. To this end, let $[0,X_0)$ be the 
(forward) interval of existence of the solution of
(\ref{23}) and (\ref{24}). If $X_0\leq x_0$, then (\ref{26}) and (\ref{28}) give us
$
u'(x)>b-\Lambda x>b-\Lambda x_0>0$ for $ 0\leq x<X_0.
$
Since $X_0<\infty$, there must be a point $x_1\in (0,X_0)$ so that $u(x_1)>0$, which proves
$b\in{\cal B}^+$. If $X_0>x_0$, then (\ref{26})--(\ref{29}) give us
\bea
u'(x)&>&b-\Lambda x_0>0,\quad 0\leq x\leq x_0,\label{30}\\
u(x_0)&>&-a +bx_0-\frac12\Lambda x_0^2>0.\label{31}
\eea
Using (\ref{26}) and (\ref{30}), (\ref{31}), we see that $u'(x)>0$ and $u(x)>0$ for all
$x_0\leq x<X_0$. In particular, $b\in{\cal B}^+$. Thus we have established the fact that
${\cal B}^+\neq\emptyset$.

To see that ${\cal B}^+$ is open, let $b_0\in {\cal B}^+$ and $u(x;b_0)$ be the corresponding
solution of (\ref{23}) and (\ref{24}) so that $u(x_0;b_0)>0$ for some $x_0>0$. By the
continuous dependence theorem for the solutions of the initial value problems of
ordinary differential equations, we can find a neighborhood of $b_0$, say
$(b_0-\vep,b_0+\vep)$ ($\vep>0$), so that for any $b\in (b_0-\vep,b_0+\vep)$, the
solution of (\ref{23}) and (\ref{24}), say $u(x;b)$, satisfies $u'(x; b)>0$ for
$0\leq x\leq x_0$ and $u(x_0;b)>0$. Since $u'(x;b)$ satisfies
\be
u'(x;b)=u'(x_0;b)+\Lambda\int_{x_0}^x\left\{(\e^{u(s;b)}-1)+\e^{-2ms+u(s;b)}\right\}\,\dd s,\quad x\geq x_0,
\ee
we see that $u'(x)>0$ and $u(x;b)>u(x_0;b)>0$ for all $x>x_0$. This proves $b\in{\cal B}^+$
and the openness of ${\cal B}^+$ follows.
\end{proof}

\begin{lemma}\label{Lemma3}
The set ${\cal B}^0$ is nonempty and closed. Furthermore, if $b\in{\cal B}^0$, then
the solution of (\ref{23}) and (\ref{24}) 
satisfies $u(x)<0$ for all $x>0$. In particular, the solution exists for all $x>0$.
\end{lemma}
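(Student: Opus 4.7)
The plan is to settle the three parts of the lemma — nonemptiness of ${\cal B}^0$, closedness of ${\cal B}^0$, and strict negativity of $u$ on $(0,\infty)$ — and then to deduce global forward existence as a by-product.

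For nonemptiness, I would argue by connectedness. Lemma \ref{Lemma1} gives the disjoint decomposition $\bfR = {\cal B}^- \cup {\cal B}^0 \cup {\cal B}^+$, while Lemma \ref{Lemma2} shows that ${\cal B}^-$ and ${\cal B}^+$ are both nonempty and open. If ${\cal B}^0$ were empty, $\bfR$ would be the disjoint union of two nonempty open sets, contradicting the connectedness of $\bfR$. Closedness is then automatic, since ${\cal B}^0 = \bfR \setminus ({\cal B}^- \cup {\cal B}^+)$ and the set in parentheses is open.

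The main work — and the step I expect to require the most care — is promoting the weak inequality $u \leq 0$ built into the definition of ${\cal B}^0$ to strict negativity on $(0,\infty)$. Suppose, toward a contradiction, that some $b \in {\cal B}^0$ produces a solution with $u(x_1) = 0$ at the smallest such $x_1 > 0$; such an $x_1$ exists because $u(0) = -a < 0$ and $u$ is continuous. Since $u \leq 0$ with equality at $x_1$, one has $u'(x_1) \geq 0$; evaluating (\ref{23}) at $x_1$ gives $u''(x_1) = \Lambda e^{-2 m x_1} > 0$. If $u'(x_1) > 0$, then $u(x) > u(x_1) = 0$ for $x$ slightly larger than $x_1$; if instead $u'(x_1) = 0$, a Taylor expansion yields $u(x) = \frac{1}{2} u''(x_1)(x - x_1)^2 + o((x - x_1)^2) > 0$ for $x$ slightly larger than $x_1$. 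Either alternative contradicts $u(x) \leq 0$ on the interval of existence, so no such $x_1$ can exist.

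Once strict negativity is in hand, the monotonicity condition $u'(x) > 0$ from the definition of ${\cal B}^0$ yields the two-sided bound $-a < u(x) < 0$ on $(0,\infty)$; hence the right-hand side $\Lambda(e^{u} - 1) + \Lambda e^{-2 m x + u}$ of (\ref{23}) is uniformly bounded on $[0,\infty)$. This bounds $u''$ uniformly, and therefore also $u'$, on any maximal forward interval of existence, so the standard extension theorem for ODEs rules out finite-time breakdown and yields existence for all $x > 0$.
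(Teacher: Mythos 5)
Your proposal is correct and follows essentially the same route as the paper: nonemptiness and closedness via the connectedness of $\bfR$ together with Lemmas \ref{Lemma1} and \ref{Lemma2}, and strict negativity by showing that a zero of $u$ at some $x_0>0$ would force $u''(x_0)=\Lambda\e^{-2mx_0}>0$, contradicting $u\leq 0$ (the paper phrases this as a local-maximum contradiction, you as a first-zero/Taylor argument — the same computation). Your explicit continuation argument for global existence is a welcome elaboration of a step the paper leaves implicit.
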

\begin{proof}
The first part is a consequence of the connectedness of $\bfR$. To prove the second part,
we assume that there is a $x_0>0$ such that $u(x_0)=0$. Such a point $x_0$ is a local
maximum point of the function $u(x)$ because $u(x)\leq0$ for all $x\geq0$. However, inserting
this fact to (\ref{23}), we have $u''(x_0)=\Lambda\e^{-2mx_0}>0$, which is a contradiction.
\end{proof}

\begin{lemma}\label{Lemma4}
For $b\in {\cal B}^0$, we have
\be\label{32}
\lim_{x\to\infty} u(x)=0.
\ee
\end{lemma}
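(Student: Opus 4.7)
The plan is to exploit the monotonicity and boundedness guaranteed by the definition of $\mathcal{B}^0$ together with Lemma \ref{Lemma3}, and then rule out any negative limit by an asymptotic analysis of (\ref{23}).

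First I would observe that since $b\in\mathcal{B}^0$, the solution exists for all $x\ge 0$ (by Lemma \ref{Lemma3}), is strictly increasing (because $u'(x)>0$ for all $x\ge 0$), and is bounded above by $0$. A monotone bounded function has a limit, so
\[
L:=\lim_{x\to\infty}u(x)
\]
exists and satisfies $L\le 0$. The content of the lemma is the claim that $L=0$.

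Next I would argue by contradiction, supposing $L<0$. Inspecting the right-hand side of (\ref{23}), the second term $\Lambda e^{-2mx+u(x)}$ vanishes as $x\to\infty$ because $u$ is bounded above and $e^{-2mx}\to 0$, while the first term tends to $\Lambda(e^L-1)<0$. Hence
\[
\lim_{x\to\infty}u''(x)=\Lambda(e^L-1)<0.
\]
Thus there exist $X_0>0$ and $\delta>0$ such that $u''(x)\le-\delta$ for all $x\ge X_0$. Integrating this inequality from $X_0$ to $x$ gives $u'(x)\le u'(X_0)-\delta(x-X_0)$, which forces $u'(x)\to-\infty$. This directly contradicts $u'(x)>0$ on $[0,\infty)$, which holds because $b\in\mathcal{B}^0$. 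Consequently $L=0$, establishing (\ref{32}).

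I do not expect a major obstacle here: the solution is already known to exist on the whole half-line by Lemma \ref{Lemma3}, monotonicity and boundedness together with the explicit structure of (\ref{23}) make the passage to the limit entirely elementary. The only minor care is in observing that $u(x)\le 0$ makes $e^{u(x)}$ bounded so that the exponentially damping factor $e^{-2mx}$ is the dominant influence in the second term, allowing the clean conclusion $u''(x)\to \Lambda(e^L-1)$.
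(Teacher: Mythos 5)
Your proof is correct and follows essentially the same route as the paper: both establish the limit $L\le 0$ from monotonicity and boundedness, then rule out $L<0$ by showing $u''$ is eventually bounded above by a negative constant and integrating to contradict $u'>0$. No issues.
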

\begin{proof}
Since $u'(x)>0$ and $u(x)<0$ for all $x\geq0$, we see that the limit
$
\lim_{x\to\infty}u(x)=u_0
$
exists and $-\infty <u_0\leq0$. If $u_0<0$, then $u(x)<u_0$ for all $x\geq0$. Using (\ref{23}),
we can find a sufficiently large $x_0>0$ such that
$
u''(x)<\frac12\Lambda (\e^{u_0}-1)\equiv -\delta, x\geq x_0.
$
In particular, $u'(x)<u'(x_0)-\delta(x-x_0)$ ($x>x_0$), which leads to a contradiction when $x$ is sufficiently large because $u'(x)>0$ for all $x>0$.
\end{proof}

\begin{lemma}\label{Lemma5}
The set ${\cal B}^0$ is a single point.
\end{lemma}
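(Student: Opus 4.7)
The plan is a standard shooting-uniqueness argument by ODE comparison. The existence part of Lemma~\ref{Lemma5} is already covered by Lemma~\ref{Lemma3}, so I only need to establish uniqueness. Suppose, for contradiction, that there exist two distinct points $b_1<b_2$ in ${\cal B}^0$, and let $u_i(x)=u(x;b_i)$ ($i=1,2$) be the corresponding solutions of (\ref{23})--(\ref{24}). By Lemma~\ref{Lemma4}, both satisfy $u_i(x)<0$ for all $x\geq0$ and $u_i(x)\to 0$ as $x\to\infty$.

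Set $w(x)=u_2(x)-u_1(x)$. Then $w(0)=0$ and $w'(0)=b_2-b_1>0$. Subtracting the equations satisfied by $u_1$ and $u_2$, I obtain
\be
w''(x)=\Lambda\left(1+\e^{-2m x}\right)\left(\e^{u_2(x)}-\e^{u_1(x)}\right),
\ee
and since $s\mapsto \e^s$ is strictly increasing, $w''(x)$ has the same sign as $w(x)$ at every point where both solutions exist.

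The key observation is then a convexity/barrier step: I claim that $w(x)>0$ and $w'(x)>0$ for all $x>0$. Indeed, by continuity $w(x)>0$ on some maximal interval $(0,x_*)$ with $x_*\leq\infty$. On this interval $w''>0$, so $w'(x)>w'(0)=b_2-b_1>0$, and hence $w(x)\geq (b_2-b_1)\,x$ on $(0,x_*)$. If $x_*<\infty$, then $w(x_*)=0$, contradicting $w(x_*)\geq (b_2-b_1)x_*>0$. Therefore $x_*=\infty$, and the lower bound $w(x)\geq (b_2-b_1)x$ persists for all $x>0$, yielding $w(x)\to\infty$ as $x\to\infty$.

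This directly contradicts $w(x)=u_2(x)-u_1(x)\to 0-0=0$ as $x\to\infty$, provided by Lemma~\ref{Lemma4}. Hence ${\cal B}^0$ contains at most one element, and combined with Lemma~\ref{Lemma3} this proves that ${\cal B}^0$ is a single point. The only step requiring any care is ensuring that the comparison solutions exist globally on $[0,\infty)$, which is exactly the content of Lemma~\ref{Lemma3}; the rest is a straightforward convexity-of-$w$ argument, so I do not anticipate a genuine obstacle.
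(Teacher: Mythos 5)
Your proof is correct, and it rests on the same core idea as the paper's: form the difference $w$ of two putative solutions and exploit that $w''$ has the same sign as $w$. The paper executes the final step differently: it linearizes via the mean value theorem to get $w''=\Lambda\bigl(\e^{\xi(x)}+\e^{-2mx+\xi(x)}\bigr)w$ with $w(0)=w(\infty)=0$ (equations (\ref{33})--(\ref{34})) and invokes the maximum principle, concluding $w\equiv0$ and hence $b_1=b_2$. You instead keep the nonlinear difference, use monotonicity of $s\mapsto\e^s$ to get the sign of $w''$, and derive a contradiction from convexity together with the strictly positive initial slope $w'(0)=b_2-b_1$, which forces the linear growth $w(x)\geq(b_2-b_1)x$ incompatible with $w(\infty)=0$ from Lemma \ref{Lemma4}. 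Your variant is slightly more elementary (no appeal to the maximum principle as a black box) and yields a quantitative growth bound, while the paper's maximum-principle formulation has the advantage of recycling verbatim in Lemma \ref{Lemma12}, where the domain is all of $\bfR$ and there is no distinguished initial point at which the two solutions agree with different slopes. Both arguments correctly use global existence from Lemma \ref{Lemma3} and the limit from Lemma \ref{Lemma4}, so there is no gap.
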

\begin{proof}
Let $u_1(x)$ and $u_2(x)$ be the two solutions of (\ref{23}) and (\ref{24}) when
$b=b_1$ and $b=b_2$, respectively, where $b_1$ and $b_2$ are taken from ${\cal B}^0$.
Then $w(x)=u_1(x)-u_2(x)$ satisfies
\bea
w''(x)&=&\Lambda\left(\e^{\xi(x)}+\e^{-2mx+\xi(x)}\right) w(x), \quad 0<x<\infty,\label{33}\\
w(0)&=&0,\quad w(\infty)=0,\label{34}
\eea
where $\xi(x)$ lies between $u_1(x)$ and $u_2(x)$. Applying the maximum principle to the above,
we find $w(x)=0$ everywhere. In particular, $b_1=b_2$.
\end{proof}

\begin{lemma}\label{Lemma6}
For $b\in {\cal B}^0$, the solution $u(x)$ of (\ref{23}) and (\ref{24}) enjoys the
sharp asymptotic estimates for $x$ near $\infty$:
\bea
0&>&u(x)\geq -C(\vep)\e^{-\min\{\sqrt{\Lambda},2m\}(1-\vep)x},\label{35}\\
0&<&u'(x)\leq C(\vep)\e^{-\min\{\sqrt{\Lambda},2m\}(1-\vep)x},\label{36}
\eea
where $\vep>0$ is an arbitrarily small number and $C(\vep)>0$ is a constant.
\end{lemma}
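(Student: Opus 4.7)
The plan is to establish the decay of $u$ by a comparison (supersolution) argument exploiting the asymptotic linearization of (\ref{23}), and then deduce the decay of $u'$ by integrating the ODE back from infinity. The heuristic is that as $x\to\infty$, $u\to 0$ and (\ref{23}) is asymptotic to the linear constant-coefficient problem $u''=\Lambda u-\Lambda e^{-2mx}$, whose decaying solutions behave like $Ae^{-\sqrt{\Lambda}\,x}+Be^{-2mx}$; hence the actual decay rate is the slower of $\sqrt{\Lambda}$ and $2m$.

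Fix $\vep>0$ small and set $\mu:=(1-\vep)\min\{\sqrt{\Lambda},2m\}$, so that $\mu^2<(1-\vep)\Lambda$ and $\mu<2m$. By Lemmas \ref{Lemma3}--\ref{Lemma4} I pick $x_1$ so large that $|u|$ is so small on $[x_1,\infty)$ that the elementary Taylor bound $\Lambda(e^u-1)\leq-(1-\vep)\Lambda|u|$ holds (a consequence of $|e^u-1-u|\leq u^2$ together with $u<0$). I then introduce the trial majorant $\phi(x):=C\,e^{-\mu x}$, choosing $C>0$ large enough that $\phi(x_1)\geq|u(x_1)|$ and $C>\Lambda/[\Lambda(1-\vep)-\mu^2]$. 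The assertion $|u|\leq\phi$ on $[x_1,\infty)$ is then proved by contradiction: if it fails, then since both $\phi$ and $|u|$ vanish at infinity while $\phi(x_1)\geq|u(x_1)|$, the difference $\phi+u=\phi-|u|$ attains a strictly negative interior minimum at some $x_*>x_1$, where $(\phi+u)''(x_*)\geq 0$. Substituting $\phi''(x_*)=\mu^2\phi(x_*)$ and the expression for $u''(x_*)$ from (\ref{23}), and applying the bounds above together with $e^{u(x_*)}\leq 1$ and $|u(x_*)|>\phi(x_*)$, yields
$$\mu^2\phi(x_*)-(1-\vep)\Lambda\phi(x_*)+\Lambda e^{-2mx_*}\geq 0,$$
which forces $C\leq \Lambda\,e^{(\mu-2m)x_*}/[\Lambda(1-\vep)-\mu^2]<\Lambda/[\Lambda(1-\vep)-\mu^2]$ (using $\mu<2m$ and $x_*>0$), contradicting the choice of $C$. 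The estimate then extends from $[x_1,\infty)$ to all $x>0$ by absorbing the bounded values of $u$ on the compact interval $(0,x_1]$ into $C(\vep)$.

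For the derivative estimate, I would first observe that by the just-proved bound on $u$ (and $e^{u-2ms}\leq e^{-2ms}$), $u''$ is integrable on $[x_1,\infty)$, so $u'$ has a limit at infinity; combined with $u(x)\to 0$ that limit must be $0$. Then $u'(x)=-\int_x^\infty u''(s)\,\dd s$, and bounding $|u''(s)|\leq\Lambda|u(s)|+\Lambda e^{-2ms}\leq\Lambda C(\vep)\,e^{-\mu s}+\Lambda e^{-2ms}$ and integrating gives $0<u'(x)\leq C'(\vep)\,e^{-\mu x}$ since $\mu\leq 2m$. The main obstacle is the bookkeeping of constants in the supersolution step so that the contradiction closes; in particular the strict inequality $\mu<2m$ is essential, since otherwise the inhomogeneous contribution $\Lambda e^{-2mx}$ would fail to be absorbed by $\phi$ and no finite $C$ would suffice. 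This is precisely what pins down the sharp asymptotic rate $\min\{\sqrt{\Lambda},2m\}$ (up to the factor $1-\vep$).
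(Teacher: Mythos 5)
Your proof is correct and takes essentially the same route as the paper: rewrite (\ref{23}) as a linearized equation with coefficient $\Lambda\e^{\xi(x)}\to\Lambda$ plus an $\mbox{O}(\e^{-2mx})$ source, deduce the decay rate $\min\{\sqrt{\Lambda},2m\}(1-\vep)$ for $u$, and then obtain (\ref{36}) by integrating $u''$ from $x$ to $\infty$ using $u'(\infty)=0$. The only difference is one of completeness: the paper merely asserts that (\ref{35}) ``follows'' from the linearized form, while you supply the explicit barrier $C\e^{-\mu x}$ and the maximum-principle contradiction that makes this rigorous.
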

\begin{proof}
We can use the mean-value theorem to rewrite (\ref{23}) as
\be\label{37}
u''(x)=\Lambda \e^{\xi(x)}u(x)+\Lambda \e^{-2mx+u(x)},
\ee
where $\xi(x)\in (u(x),0)$. Since $u(x)$ goes to zero as $x\to\infty$, we may view the
second term on the right-hand side of (\ref{37}) as a source term which vanishes at
$x=\infty$ in the order O($\e^{-2mx}$). Therefore, the estimate (\ref{35}) follows
from (\ref{37}) because $\e^{\xi(x)}\to 1$ as $x\to\infty$.

Inserting (\ref{35}) into (\ref{37}), we see that $u''(x)$ satisfies
$|u''(x)|=\mbox{O}(\e^{-\min\{\sqrt{\Lambda},2m\}(1-\vep)x})$ for any small $\vep>0$. Since
$u'(\infty)=0$, we see that (\ref{36}) follows as well.
\end{proof}

Since for each $a>0$, the set ${\cal B}^0$ contains exactly one point, $b>0$, we may denote
this point as $b_a$. Namely,
\be\label{38}
{\cal B}^0=\{b_a\}.
\ee
Our discussion above has shown that ${\cal B}^-=(-\infty,b_a)$ and
${\cal B}^+=(b_a,\infty)$. This knowledge will be useful in numerical solution of the
problem. The following behavior of $b_a$ with regard to $a$ is important for our construction
at the other end of the interval, $x=-\infty$.

\begin{lemma}\label{Lemma7}
The number $b_a$ defined in (\ref{38}) has the property
\be\label{39}
\lim_{a\to 0^+}b_a=0.
\ee
\end{lemma}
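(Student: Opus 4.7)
The plan is to establish the stronger quantitative statement $b_a=\mbox{O}(a)$ as $a\to 0^+$ by extracting a first-integral of the ODE (\ref{23}) evaluated on the ${\cal B}^0$-solution. Specifically, multiplying (\ref{23}) by $u'$ and using the identities $(\e^u-1)u'=(\e^u-u)'$ and $\e^{-2mx+u}u'=(\e^{-2mx+u})'+2m\e^{-2mx+u}$, one can recast the equation of motion as
$$\frac{d}{dx}\left[\frac{(u')^2}{2}-\Lambda(\e^u-u)-\Lambda \e^{-2mx+u}\right]=2m\Lambda\, \e^{-2mx+u}.$$
The crucial feature is that the inhomogeneous right-hand side is manifestly non-negative and exponentially small in $x$, so it is easily integrated.

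Next I would integrate this identity over $[0,\infty)$, evaluating the bracket at both endpoints. At $x=0$ the bracket is immediate from (\ref{24}); at $x=\infty$ the vanishing of $u$ and $u'$ follows from Lemmas \ref{Lemma4} and \ref{Lemma6}, so the bracket equals $-\Lambda$. Rearranging the resulting identity yields the explicit expression
$$\frac{b_a^2}{2}=\Lambda\bigl(2\e^{-a}+a-1\bigr)-2m\Lambda \int_0^\infty \e^{-2mx+u(x)}\,\dd x.$$

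Finally I would use the sign and monotonicity information from Lemmas \ref{Lemma1} and \ref{Lemma3}, which confine $u$ to $[-a,0]$ for $x\geq 0$, to bound the integral below by $\int_0^\infty \e^{-2mx-a}\,\dd x=\e^{-a}/(2m)$. Substituting this into the displayed identity gives
$$\frac{b_a^2}{2}\leq \Lambda(\e^{-a}+a-1)=\Lambda\cdot\frac{a^2}{2}+\mbox{O}(a^3),$$
from which both $b_a\to 0$ and the sharper quantitative rate $b_a=\mbox{O}(a)$ drop out at once.

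The main obstacle is really just justifying that the first-integral identity can be evaluated at $x=\infty$ in the first place, since the whole argument collapses if the bracketed expression fails to vanish there; this is exactly what the sharp exponential decay of $u$ and $u'$ in Lemma \ref{Lemma6} guarantees. Beyond that point, everything reduces to an elementary Taylor expansion of $\e^{-a}$, and no new analytic input beyond the lemmas already established is required.
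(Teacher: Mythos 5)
Your proposal is correct and follows essentially the same route as the paper: the identical first integral obtained by multiplying (\ref{23}) by $u'$ and integrating over $(0,\infty)$ with the endpoint values $u(0)=-a$, $u'(0)=b_a$, $u(\infty)=u'(\infty)=0$, combined with the identical lower bound $\int_0^\infty \e^{-2mx+u(x)}\,\dd x\geq \e^{-a}/(2m)$ coming from $u\geq -a$ (the paper's (\ref{40})--(\ref{41})). The only harmless addition is that you state the quantitative rate $b_a=\mbox{O}(a)$ explicitly, which is already implicit in the paper's final inequality.
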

\begin{proof} Let $u$ be the solution of (\ref{23}) and (\ref{24}) where $b=b_a$.
Since $u'(x)>0$ for $x>0$, we have
\be\label{40}
\int_0^{\infty} \e^{-2mx+u(x)}\dd x>\frac{\e^{-a}}{2m}.
\ee
Multiplying (\ref{23}) by $u'$, integrating over $(0,\infty)$, and using $u'(\infty)=0$,
we have 
\bea
b_a^2&=&2\Lambda (\e^{-a}+a-1)+2\Lambda\e^{-a}-4m\Lambda \int_0^{\infty}\e^{-2mx+u(x)}\dd x\nn\\
&<&2\Lambda (\e^{-a}+a-1)+2\Lambda\e^{-a}-2\Lambda \e^{-a},\label{41}
\eea
where we have used (\ref{40}). Letting $a\to 0^+$ in (\ref{41}), we see that the proof follows.
\end{proof}

Next, we consider the solution in the interval $(-\infty,0)$. Note that the boundary condition
$u(-\infty)=-\infty$ is not sufficient for us to recover the boundary asymptotics specified
for $Z$ and $\Phi$ in (\ref{ubc})  or $h$ and $g$ in (\ref{11}).
For example, in view of (\ref{17}) and (\ref{19}), we have $g(x)=\e^{-mx+\frac12 u(x)}$. Therefore, this relation and
the last condition in (\ref{11}) imply that we need to achieve the precise condition
\be\label{42}
\lim_{x\to-\infty}(-2mx +u(x))=0.
\ee
Besides, combining the equation (\ref{10}) and relation (\ref{19}), we see that the first condition in (\ref{11})
requires us to get
\be\label{43}
\lim_{x\to-\infty} u'(x)= 2m.
\ee
These two compatible conditions seem to be more subtle to realize. In the subsequent analysis,
we solve this problem.

For convenience, we use the new variable $v=-2mx+u$ and change $x$ to $-x$. Hence, for
$-\infty<x<0$, (\ref{23}) and (\ref{24}) give us a new initial value problem in terms of
$v$ and $x$ as follows,
\bea
v''&=& \Lambda \e^{-2m x +v}+\Lambda (\e^v-1),\quad x>0,\label{44} \\
v(0)&=&-a,\quad v'(0)=2m-b_a.\label{45}
\eea
It is interesting that (\ref{44}) is of the same form as (\ref{23}) for $x>0$. The difference
here is that in the initial condition (\ref{45}), the parameter $a>0$ is to be adjusted
as a shooting parameter, but not the slope $v'(0)$ which appears indirectly and depends on
$a$.

Now define the disjoint sets
\bea
{\cal A}^-&=&\{a>0\,|\,\mbox{There exists }x>0 \mbox{ such that } v'(x)<0\},\nn\\
{\cal A}^0&=&\{a>0\,|\, v'(x)>0\mbox{ and } v(x)\leq 0\mbox{ for all }x\geq 0\},\nn\\
{\cal A}^+&=&\{a>0\,|\, v'(x)>0\mbox{ for all }x\geq0\mbox{ and there is an }x>0
\mbox{ such that } v(x)>0\}.\nn
\eea
Note that, as before, the above statements should be understood to mean in the interval of existence of the concerned solutions.

\begin{lemma}\label{Lemma8}
The set ${\cal A}^+$ is nonempty and open.
\end{lemma}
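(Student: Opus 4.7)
The argument parallels Lemma \ref{Lemma2} for ${\cal B}^+$, but with the twist that the initial slope $v'(0)=2m-b_a$ depends on the shooting parameter $a$ through the quantity $b_a$ supplied by Lemma \ref{Lemma5}. The key input is Lemma \ref{Lemma7}, which lets me start the orbit as close to $(v,v')=(0,2m)$ as I wish by taking $a$ small; since $2m>0$, such an orbit should shoot up past $0$ while keeping a positive derivative.

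\textbf{Step 1 (Nonemptiness).} I would pick $a>0$ so small that $b_a<m/2$, which is possible by Lemma \ref{Lemma7}; then $v(0)=-a$ is close to $0$ and $v'(0)>3m/2$. So long as $v(x)\le 0$ on a forward interval, the right-hand side of (\ref{44}) is bounded in modulus by $2\Lambda$, since $\e^{v}-1\in[-1,0]$ and $\e^{-2m x+v}\le 1$; hence $v'(x)\ge 3m/2-2\Lambda x$ on that interval. Shrinking $a$ further so that $a\le m^2/(4\Lambda)$, I obtain $v'(x)\ge m$ on $[0,m/(4\Lambda)]$, which forces $v$ to cross $0$ at some $x_*\le a/m\le m/(4\Lambda)$ with $v'(x_*)\ge m$. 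For $x>x_*$, as long as $v>0$ the ODE gives $v''(x)=\Lambda \e^{-2m x+v}+\Lambda(\e^v-1)>0$, so $v'$ is non-decreasing and $v$ keeps growing; this bootstrap ensures $v>0$ and $v'>0$ throughout the forward interval of existence. Therefore $a\in{\cal A}^+$.

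\textbf{Step 2 (Openness).} Let $a_0\in{\cal A}^+$ and choose $x_0>0$ with $v(x_0;a_0)>0$ and $v'(x;a_0)>0$ on $[0,x_0]$. I would first note that the initial data $(-a,\,2m-b_a)$ depends continuously on $a$; this follows because Lemma \ref{Lemma5} identifies $\{b_a\}$ as a single point separating the sets ${\cal B}^-$ and ${\cal B}^+$, which are jointly open in $(a,b)$ by continuous dependence for the initial value problem (\ref{23})--(\ref{24}), forcing $a\mapsto b_a$ to be continuous. Then continuous dependence for (\ref{44})--(\ref{45}) preserves the strict inequalities $v(x_0;a)>0$ and $v'(x;a)>0$ on $[0,x_0]$ for $a$ near $a_0$, and the monotonicity bootstrap from Step 1 propagates them to all $x>x_0$ in the interval of existence. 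Hence ${\cal A}^+$ is open.

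\textbf{Main obstacle.} The delicate point is not the bootstrap itself but justifying continuous dependence of the entire initial value problem on $a$: because $v'(0)$ embeds the implicitly defined $b_a$, a clean openness argument hinges on the continuity of $a\mapsto b_a$, which must be extracted from the characterization of ${\cal B}^0$ as a single point together with the joint openness of ${\cal B}^{\pm}$. Everything else is a straightforward transcription of the ${\cal B}^+$ analysis to the $v$-variable.
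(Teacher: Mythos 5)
Your proof is correct and follows essentially the same route as the paper's: Lemma \ref{Lemma7} supplies initial data close to $(0,2m)$ so that the orbit shoots upward (nonemptiness), continuous dependence on initial conditions gives openness, and the convexity bootstrap ($v''>0$ once $v>0$, $v'>0$) propagates positivity through the whole interval of existence. The only divergences are technical rather than structural: you replace the paper's comparison with the limiting solution $w(0)=0$, $w'(0)=2m$ of (\ref{46})--(\ref{47}) by the explicit a priori bound $|v''|\le 2\Lambda$ valid while $v\le 0$, and you make explicit the continuity of $a\mapsto b_a$ (deduced from the single-point characterization of ${\cal B}^0$ in Lemma \ref{Lemma5} together with the joint openness of ${\cal B}^\pm$ in $(a,b)$), a point the paper passes over when it invokes ``the same continuity argument as in Step 2'' for an arbitrary $a_0\in{\cal A}^+$.
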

\begin{proof} The proof is divided into a few steps.

{\em Step 1.} Let $a\in{\cal A}^+$ and $x_0>0$ in the interval of existence
of the solution $v$ so that $v(x_0)>0$. Then $v'(x)>0$ for all $x\in[0,x_0]$.

In fact, if there is an $x_1\in (0,x_0]$ so that $v'(x_1)=0$, then $v''(x_1)=0$
otherwise we would have $v'(x)<0$ for $x>x_1$ ($x<x_1$) but $x$ is near $x_1$ when
$v''(x_1)<0$ ($v''(x_1)>0$), which contradicts the condition that $a\in{\cal A}^+$.
Hence, differentiating (\ref{44}), we have
$
v'''(x_1)=-2m\Lambda \e^{-2m x_1+v(x_1)}<0.
$
In particular, $v'(x)$ is concave down in a neighborhood of $x_1$ and the properpty
that $v'(x_1)=0$ implies that $v'(x)<0$ for $x\neq x_1$ but $x$ is near $x_1$, which is
another contradiction.

{\em Step 2.} The set ${\cal A}^+$ is nonempty.

To prove the claim, we consider the initial value problem
\bea
w''&=& \Lambda \e^{-2m x +w}+\Lambda (\e^w-1),\quad x>0,\label{46} \\
w(0)&=&0,\quad w'(0)=2m.\label{47}
\eea
Of course, the solution $w$ of (\ref{46}) and (\ref{47}) exists over an interval $[0,x_0]$
for some $x_0>0$ and it is clear that $w(x)>0$ for $0<x\leq x_0$ and $w'(x)>2m$ for
$0\leq x\leq x_0$. By the continuous dependence theorem, we see that, when $a>0$ is small
enough, the unique solution $v$ of (\ref{44}) and (\ref{45}) satisfies
$v'(x)>m>0$ for $0\leq x\leq x_0$ and $v(x_0)>0$. Now for $x>x_0$, using (\ref{44})
and $v'(x_0)>0$, we see that $v'(x)>0$ for all $x>x_0$ as well. This proves $a\in{\cal A}^+$.

{\em Step 3.} The set ${\cal A}^+$ is open.

In fact, let $a_0\in {\cal A}^+$. Then, by Step 1, the solution $w$ of (\ref{44}) and (\ref{45})
with $a=a_0$ satisfies $w(x_0)>0$ and $w'(x)>0$ for all $x\in[0,x_0]$ for some $x_0>0$. Using
the same continuity argument as in Step 2, we deduce that the solution $v$ of
(\ref{44}) and (\ref{45}) satisfies $v'(x)>0$ for all $x>0$ and $v(x_0)>0$ for the parameter
$a$ near $a_0$. Hence the openness of ${\cal A}^+$ follows.
\end{proof}

\begin{lemma}\label{Lemma9}
The set ${\cal A}^-$ is open and nonempty.
\end{lemma}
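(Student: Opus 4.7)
The plan is to mimic the structure of Lemma \ref{Lemma2}, with the essential new ingredient being the behaviour of the shooting parameter $b_a$ from (\ref{38}) as $a\to\infty$. Since the initial slope in (\ref{45}) equals $2m - b_a$, once we exhibit an $a>0$ for which $b_a > 2m$ we have $v'(0) < 0$, and by continuity $v'(x) < 0$ on a right-neighbourhood of $0$, so $a \in \mathcal{A}^-$. Nonemptiness thus reduces to producing a lower bound on $b_a$ that grows without bound in $a$.

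The main computation I would carry out is as follows. Return to the forward problem (\ref{23})--(\ref{24}) with $b = b_a$, multiply (\ref{23}) by $u'$, and integrate from $0$ to $\infty$. The boundary contributions at $\infty$ vanish by Lemmas \ref{Lemma4} and \ref{Lemma6}, while one integration by parts converts $\int_0^\infty u'\,\e^{-2mx+u}\,\dd x$ into $-\e^{-a} + 2m\int_0^\infty \e^{-2mx+u}\,\dd x$. This yields the identity $b_a^2 = 2\Lambda(a + 2\e^{-a} - 1) - 4m\Lambda \int_0^\infty \e^{-2mx+u(x)}\,\dd x$. By Lemma \ref{Lemma3}, the solution $u$ corresponding to $b = b_a$ is strictly negative on $(0,\infty)$, hence $\int_0^\infty \e^{-2mx+u}\,\dd x < 1/(2m)$, and therefore $b_a^2 > 2\Lambda(a + 2\e^{-a} - 2)$, which tends to $\infty$ as $a\to\infty$. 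Hence $b_a > 2m$ for all sufficiently large $a$ and $\mathcal{A}^- \neq \emptyset$.

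For openness the argument is standard. If $a_0 \in \mathcal{A}^-$, fix $x_0 > 0$ in the interval of existence of the corresponding $v$ at which $v'(x_0) < 0$. By continuous dependence of solutions of (\ref{44})--(\ref{45}) on the initial data, combined with continuity of $a\mapsto b_a$, every $a$ in a small enough neighbourhood of $a_0$ yields a solution that is still defined on $[0,x_0]$ and still satisfies $v'(x_0) < 0$, so $a \in \mathcal{A}^-$. The continuity of $a\mapsto b_a$, which is not explicitly established earlier, follows by a routine compactness argument: the upper bound (\ref{41}) bounds $\{b_{a_n}\}$ whenever $a_n \to a_0$, any subsequential limit $b^*$ generates (by continuous dependence) a solution $u^*$ with $(u^*)' \geq 0$ and $u^* \leq 0$ on $[0,\infty)$, the reasoning of Lemma \ref{Lemma4} rules out a strictly negative limit at infinity, and Lemma \ref{Lemma5} then forces $b^* = b_{a_0}$. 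The main obstacle I anticipate is the lower-bound estimate on $b_a$; once it is in place, the rest is routine shooting-argument bookkeeping paralleling Lemmas \ref{Lemma1}--\ref{Lemma2}.
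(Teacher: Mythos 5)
Your proof is correct and, for nonemptiness, essentially identical to the paper's: you derive the same energy identity (the first line of (\ref{41}), already obtained in Lemma \ref{Lemma7}), insert the same bound $\int_0^\infty \e^{-2mx+u(x)}\,\dd x<\frac{1}{2m}$ to get $b_a^2>2\Lambda(a+2\e^{-a}-2)$, and conclude $b_a>2m$ for large $a$, so $v'(0)=2m-b_a<0$. The only difference is on openness, which the paper dismisses as obvious; you correctly point out that it hinges on the continuity of $a\mapsto b_a$ (since the initial slope in (\ref{45}) depends on $a$ through $b_a$) and your compactness argument for that continuity is sound, so this is a more careful rendering of the same approach rather than a different one.
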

\begin{proof}
The openness of ${\cal A}$ is obvious. To see that ${\cal A}$ is nonempty, we return to
the solution $u$ of (\ref{23}) and (\ref{24}) and use the notation of Lemma \ref{Lemma7}.
Inserting
$
\int_0^{\infty} \e^{-2m x+u(x)}\dd x<\frac1{2m}
$
into the first line in (\ref{41}), we have
$
b_a^2>2\Lambda(2\e^{-a}+a-1)-2\Lambda.
$
In particular, $b_a>2m$ when $a>0$ is sufficiently large. This establishes $a\in{\cal A}^-$
in view of (\ref{45}).
\end{proof}

\begin{lemma}\label{Lemma10}
The interval $(0,\infty)$ is the disjoint union of the sets ${\cal A}^-,{\cal A}^0,
{\cal A}^+$. In particular, ${\cal A}^0\neq\emptyset$.
\end{lemma}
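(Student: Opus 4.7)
The plan is to transfer the argument of Lemma~\ref{Lemma1} almost verbatim to the present setting, since the ODE \eqref{44} shares the same structural features with \eqref{23} that made that proof work. First, the three sets ${\cal A}^-,{\cal A}^0,{\cal A}^+$ are pairwise disjoint directly from their definitions, so the only nontrivial content in the decomposition claim is to show that $a\notin{\cal A}^-$ forces $v'(x)>0$ for every $x>0$ in the interval of existence; then a simple dichotomy on the sign of $v$ separates ${\cal A}^0$ from ${\cal A}^+$.

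For that key implication, I would suppose $a\notin{\cal A}^-$ and that $v'(x_0)=0$ at some first $x_0>0$. A nonzero value of $v''(x_0)$ would make $v'$ change sign through $x_0$, pushing $v'$ below zero on one side of $x_0$ and contradicting $a\notin{\cal A}^-$; hence $v''(x_0)=0$ as well. Differentiating \eqref{44} once and inserting $v'(x_0)=0$ yields
\[
v'''(x_0)=\Lambda \e^{v(x_0)}v'(x_0)+\Lambda \e^{-2m x_0+v(x_0)}\bigl(-2m+v'(x_0)\bigr)=-2m\Lambda \e^{-2m x_0+v(x_0)}<0,
\]
so $v''$ is strictly decreasing at $x_0$. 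Combined with $v''(x_0)=0$, this places $v''(x)<0$ for $x$ slightly larger than $x_0$, and then $v'(x_0)=0$ forces $v'(x)<0$ just past $x_0$, once again contradicting $a\notin{\cal A}^-$. Thus $v'>0$ on the portion of $(0,\infty)$ where the solution exists, and the trichotomy follows.

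For the nonemptiness of ${\cal A}^0$, I would invoke connectedness of $(0,\infty)$: by Lemmas~\ref{Lemma8} and~\ref{Lemma9}, both ${\cal A}^+$ and ${\cal A}^-$ are open and nonempty subsets of $(0,\infty)$. If ${\cal A}^0$ were empty, the identity $(0,\infty)={\cal A}^-\cup{\cal A}^+$ would realize the interval as a disjoint union of two nonempty open sets, contradicting its connectedness. The main obstacle in the whole argument is really only the third-derivative computation that rules out the degenerate configuration $v'(x_0)=v''(x_0)=0$; once that is secured, the rest is a direct transcription of the corresponding ${\cal B}$-argument, together with one appeal to connectedness.
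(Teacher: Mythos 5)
Your proposal is correct and follows essentially the same route as the paper: the paper's own proof simply notes that the trichotomy "has already been established" (i.e., by transferring the third-derivative argument of Lemma~\ref{Lemma1}, as you do explicitly) and then deduces ${\cal A}^0\neq\emptyset$ from the connectedness of $(0,\infty)$ together with Lemmas~\ref{Lemma8} and~\ref{Lemma9}. Your computation $v'''(x_0)=-2m\Lambda \e^{-2mx_0+v(x_0)}<0$ is exactly the key step the paper relies on.
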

\begin{proof}
The first statement has already been established. The second statement is a consequence of
the connectedness of the interval which cannot be the union of two nonoverlapping and
nonempty open sets.
\end{proof}

\begin{lemma}\label{Lemma11}
Let $a\in {\cal A}^0$ and $v$ be the corresponding solution of (\ref{44}). Then
the solution exists globally and $v(x)<0$ for all $x>0$. Besides, $v(x)\to0$ as
$x\to\infty$ and there hold the asymptotic estimates near $x=\infty$:
\bea
0&>&v(x)\geq -C(\vep)\e^{-\min\{\sqrt{\Lambda},2m\}(1-\vep)x},\label{48}\\
0&<&v'(x)\leq C(\vep)\e^{-\min\{\sqrt{\Lambda},2m\}(1-\vep)x},\label{49}
\eea
where $\vep>0$ is an arbitrarily small number.
\end{lemma}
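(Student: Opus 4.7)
The plan is to carry out the same three-step program executed for the forward shooting in Lemmas \ref{Lemma3}, \ref{Lemma4}, and \ref{Lemma6}, since (\ref{44}) is structurally identical to (\ref{23}) and the class ${\cal A}^0$ plays exactly the role that ${\cal B}^0$ played there. First I would prove strict negativity $v(x)<0$ on $(0,\infty)$ by the same maximum-principle argument as in Lemma \ref{Lemma3}: if $v(x_0)=0$ at some $x_0>0$, then $x_0$ is a local maximum of $v$ (because $v\le 0$ throughout the interval of existence by $a\in{\cal A}^0$), so $v''(x_0)\le 0$; but (\ref{44}) at $x_0$ gives $v''(x_0)=\Lambda\e^{-2mx_0}>0$, a contradiction. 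Global existence on $[0,\infty)$ is then immediate since $-a\le v(x)<0$ keeps the right-hand side of (\ref{44}) uniformly bounded on any bounded $x$-interval, so neither $v$ nor $v'$ can blow up in finite ``time'' and the standard continuation theorem applies.

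Next I would establish $v(\infty)=0$ exactly as in Lemma \ref{Lemma4}. The limit $v_0:=\lim_{x\to\infty}v(x)\in[-a,0]$ exists by monotonicity and boundedness. If $v_0<0$, then for $x$ sufficiently large the source term $\Lambda\e^{-2mx+v}$ is negligible and
\be
v''(x)\le\frac{\Lambda}{2}(\e^{v_0}-1)\equiv-\delta<0,
\ee
which after integration forces $v'(x)\to-\infty$, contradicting $a\in{\cal A}^0$.

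For the asymptotic estimates (\ref{48})--(\ref{49}) I would mimic Lemma \ref{Lemma6}. The mean value theorem gives
\be
v''(x)=\Lambda \e^{\xi(x)}v(x)+\Lambda \e^{-2mx+v(x)},\quad \xi(x)\in(v(x),0),
\ee
so $\Lambda\e^{\xi(x)}\to\Lambda$ as $x\to\infty$ while the inhomogeneous term is ${\rm O}(\e^{-2mx})$. The homogeneous piece has characteristic exponents tending to $\pm\sqrt{\Lambda}$, so constructing an exponential supersolution of the form $C(\vep)\e^{-\min\{\sqrt{\Lambda},2m\}(1-\vep)x}$ on $[x_1,\infty)$ for $x_1=x_1(\vep)$ large, and comparing with $-v$, would yield (\ref{48}). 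Substituting this bound back into the displayed identity gives $|v''(x)|\le C'(\vep)\e^{-\min\{\sqrt{\Lambda},2m\}(1-\vep)x}$, so $v''$ is integrable on $[0,\infty)$; hence $v'(\infty)$ exists, and since $v$ is bounded it must equal $0$. Then $v'(x)=-\int_x^\infty v''(s)\,\dd s$ immediately gives (\ref{49}).

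The main obstacle I expect is making the supersolution construction for (\ref{48}) clean when $\sqrt{\Lambda}$ and $2m$ are comparable: because $\Lambda\e^{\xi(x)}$ only approaches (rather than equals) $\Lambda$, the small loss $\vep$ in the exponent is unavoidable and must be apportioned carefully between the homogeneous and inhomogeneous contributions, with the choice of $x_1$ depending on $\vep$. Once the exponential barrier is installed beyond $x_1$, the remainder of the argument is routine ODE comparison.
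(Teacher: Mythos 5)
Your proposal is correct and follows essentially the same route as the paper: the paper's proof of this lemma is literally a citation of the proofs of Lemmas \ref{Lemma3}, \ref{Lemma4}, and \ref{Lemma6} (since (\ref{44}) has the same structure as (\ref{23})), and you reproduce exactly those arguments --- the maximum-principle contradiction at an interior zero, the $v''\le-\delta$ contradiction if $v_0<0$, and the linearized equation with an ${\rm O}(\e^{-2mx})$ source for the decay rates --- merely adding explicit detail (the continuation argument for global existence and the supersolution comparison) where the paper is terse.
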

\begin{proof}
See the proofs for Lemmas \ref{Lemma3}, \ref{Lemma4}, and \ref{Lemma6}.
\end{proof}

Now returning to the original variables $-x\mapsto x$ and $u(x)=v(-x)+2m x$, we see that
we have obtained a suitable number $a>0$ (with the unique corresponding $b=b_a$) so 
that the solution $u$ to the initial value problem (\ref{23}) and (\ref{24}) exists
over the entire interval $-\infty<x<\infty$ and enjoys the properties stated in
Lemmas \ref{Lemma3} and \ref{Lemma6} for $x>0$ and $u(x)-2mx\to0$ as $x\to-\infty$. 
Moreover, for $x<0$,
$u(x)-2m x<0$ and $u'(x)-2m<0$, and for $x$ near $-\infty$ we have the asymptotic estimates
\bea
0&>&u(x)-2m x\geq -C(\vep)\e^{\min\{\sqrt{\Lambda},2m\}(1-\vep)x},\label{50}\\
0&>&u'(x)-2m\geq -C(\vep)\e^{\min\{\sqrt{\Lambda},2m\}(1-\vep)x},\label{51}
\eea
where $\vep>0$ can be chosen to be arbitrarily small.

\begin{lemma}\label{Lemma12}
The suitable number $a>0$ stated above is also unique.
\end{lemma}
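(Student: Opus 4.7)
The plan is to reprise the maximum-principle argument of Lemma \ref{Lemma5}, now applied on the full real line rather than on $[0,\infty)$. I would assume for contradiction that $a_1\neq a_2$ both lie in ${\cal A}^0$, and let $u_1,u_2$ be the corresponding full solutions of (\ref{23}) on $\bfR$, each enjoying the decay $u_i(x)\to 0$ as $x\to\infty$ from Lemma \ref{Lemma6} and $u_i(x)-2mx\to 0$ as $x\to-\infty$ from (\ref{50}). Setting $w=u_1-u_2$ converts the question of uniqueness of the shooting parameter into the question of whether $w$ must vanish identically.

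The key computation is to subtract the two copies of (\ref{23}) and apply the mean value theorem to $\e^{u_1}-\e^{u_2}=\e^{\xi(x)}w(x)$, with $\xi(x)$ between $u_1(x)$ and $u_2(x)$. This yields the linear homogeneous equation
\[
w''(x)=\Lambda\bigl(1+\e^{-2mx}\bigr)\,\e^{\xi(x)}\,w(x),\qquad x\in\bfR,
\]
whose coefficient $p(x):=\Lambda(1+\e^{-2mx})\e^{\xi(x)}$ is strictly positive. The boundary decay of $u_1,u_2$ transfers directly to $w$: one has $w(x)\to 0$ as $x\to\infty$ from Lemma \ref{Lemma6}, and $w(x)=(u_1(x)-2mx)-(u_2(x)-2mx)\to 0$ as $x\to-\infty$ from (\ref{50}). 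A standard maximum-principle argument then finishes the proof: if $w\not\equiv 0$, then, after possibly replacing $w$ by $-w$, one may assume $w$ takes a positive value somewhere, and the decay at both ends forces the global supremum to be attained at an interior point $x_0\in\bfR$, where $w'(x_0)=0$ and $w''(x_0)\le 0$; but the equation gives $w''(x_0)=p(x_0)w(x_0)>0$, a contradiction. Hence $u_1\equiv u_2$, and evaluating at $x=0$ produces $a_1=-u_1(0)=-u_2(0)=a_2$.

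The only delicate point I anticipate is justifying the maximum principle on the unbounded domain $\bfR$: because $w$ vanishes only asymptotically at $\pm\infty$, one must argue that the supremum of $w$ is genuinely attained, which follows from the continuity of $w$ together with the simultaneous decay at both ends. If a cleaner presentation is desired, the same conclusion can be reached by first working on a sufficiently large interval $[-R,R]$, exploiting the exponential rates of Lemma \ref{Lemma6} and (\ref{50})-(\ref{51}) to make $|w(\pm R)|$ as small as one wishes, and then letting $R\to\infty$. Apart from this piece of bookkeeping, the argument is a direct transcription of the proof of Lemma \ref{Lemma5}.
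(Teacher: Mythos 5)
Your argument is correct and is essentially the paper's own proof: both set $w=u_1-u_2$, observe that $w$ satisfies the linearized equation (\ref{33}) on all of $\bfR$ with $w(\pm\infty)=0$ (the decay at $-\infty$ coming from $u_i(x)-2mx\to0$), and conclude $w\equiv0$ by the maximum principle. Your added justification of why the supremum is attained on the unbounded domain is a welcome detail the paper leaves implicit.
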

\begin{proof} Suppose that $a_1>0$ and $a_2>0$ are two suitable numbers defined above
and $u_1(x)$ and $u_2(x)$ are the corresponding solutions, respectively. Then
$w=u_1-u_2$ satisfies $w(-\infty)=w(\infty)=0$ and the equation (\ref{33}) over
$-\infty<x<\infty$. Using the maximum principle again, we have $w(x)\equiv0$. In particular,
$a_1=a_2$.
\end{proof}

Thus, regarding the existence of domain wall solutions, we can draw the following conclusions.
\begin{enumerate}
\item[(i)] The rescaled BPS system (\ref{10}) subject to the
boundary condition (\ref{11}) has a unique solution $(h,f,g)$ up to translations. 

\item[(ii)] The unique solution obtained enjoys
the sharp exponential asymptotic estimates
\bea
&&-1<h(x)<-1+C(\vep)\e^{\min\{\sqrt{\Lambda},2m\}(1-\vep)x},\nn\\
&&\e^{mx}>f(x)>
\e^{mx}\left(1-C(\vep)\e^{\min\{\sqrt{\Lambda},2m\}(1-\vep)x}\right),\nn\\
&&1>g(x)>1-C(\vep)\e^{\min\{\sqrt{\Lambda},2m\}(1-\vep)x}\quad \mbox{as }x\to-\infty;\nn\\
&&0>h(x)>-C(\vep)\e^{-\min\{\sqrt{\Lambda},2m\}(1-\vep)x},\nn\\
&&1>f(x)>1-C(\vep)\e^{-\min\{\sqrt{\Lambda},2m\}(1-\vep)x},\nn\\
&&\e^{-mx}>g(x)>\e^{-mx}\left(1-C(\vep)\e^{-\min\{\sqrt{\Lambda},2m\}(1-\vep)x}\right)\quad
\mbox{as }x\to\infty,\nn
\eea
where $\vep$ is an arbitrarily small number and $C(\vep)>0$ is a constant depending
on $\vep$.
\end{enumerate}

Returning to the original variables, $Z,\Sigma,\Phi$, and omitting the arbitrarily small parameter $\vep$, we see that we have obtained the existence and uniqueness (modulo
translations) of a domain wall soliton of the BPS system (\ref{y5.10}) (with the upper sign) which fulfills the exact boundary conditions
\bea
Z(x)+\frac m\lm&=&\mbox{O}\left(\e^{\min\{\lm\eta,2m\}x}\right),\quad \Sigma(x)=\mbox{O}\left(\e^{mx}\right),\quad
\Phi(x)-\eta=\mbox{O}\left(\e^{\min\{\lm\eta,2m\}x}\right),\nn\\
 x&\to&-\infty;\\
Z(x)&=&\Sigma(x)-\eta=\mbox{O}\left(\e^{-\min\{\lm\eta,2m\}x}\right),\quad \Phi(x)=\mbox{O}\left(\e^{-mx}\right),\nn\\
x&\to&\infty,
\eea
realizing a phase transition from $u^4=\left(-\frac m\lm,0,\eta\right)$ to $u^2=(0,\eta,0)$ as anticipated.

\section{Conclusions and remarks}

The finite-energy static solutions of the equations of motion of quantum field theory models generally known as
topological solitons are difficult to obtain due to the complicated structure of the equations and their BPS systems
present a  significant reduction and offer great sight into the problems. However, a well-known puzzle is whether
at the BPS critical coupling all such solitons must be BPS, hence achieving the associated minimum BPS energy bounds.
The earlier studies on the puzzle established two main categories of results: In the Abelian Higgs vortex model all finite-energy
solitons in the BPS coupling are BPS \cite{JT,T1} but in the Yang--Mills--Higgs monopole model \cite{T2}
and Yang--Mills instanton
model \cite{Bor,Parker,SS1,SS2,SSU} there exist non-BPS solitons which are of course of energies above the BPS bounds.
In this paper we have carried out a systematic investigation on the puzzle for a general domain wall model
governing a multiple real-component scalar field $u=(u_1,\dots,u_n)$ in terms of a superpotential 
$W(u)$ so that the potential  $V(u)$ is given by $V=\frac12|\nabla W|^2$ and the ground states of the model are
the critical points of $W$. Our work shows that the answer whether or not all finite-energy solitons are BPS essentially lies in the definiteness of the Hessian $H(u)=( W_{u_i u_j}(u))$ of $W$ which leads us to draw the following conclusions.
\begin{enumerate}
\item[(i)] Let $u^a$ and $u^b$ be two ground states which are the critical points of the superpotential $W$. If
the Hessian of $W$ is positive or negative definite at either $u^a$ or $u^b$ then any finite-energy
domain wall solution linking $u^a$ and $u^b$ is necessarily BPS.

\item[(ii)] If the Hessian of $W$ fails to be definite at both $u^a$ and $u^b$ there are examples showing that
 the BPS bounds are not attainable and there exist finite-energy domain wall solutions linking $u^a$ and $u^b$ which are not BPS.
\end{enumerate}

Finally we note that some of the interesting domain wall models are not of the superpotential type studied here but it may be 
shown that all the finite-energy domain wall solitons there are necessarily BPS. These models include those arising in the
monopole confinement problem \cite{ABS,CLY1} and the Skyrme crystal problem \cite{C,CLY2}.

\medskip
\medskip

The authors were partially
supported by National Natural Science Foundation of China under Grant 11471100.

\small{

}


\begin{thebibliography}{99}

\bibitem{AT}
E. R. C. Abraham and P. K. Townsend, Intersecting extended objects in supersymmetric field theories,
{\em Nucl. Phys.} B {\bf351} (1991) 313--332.

\bibitem{A}
A. Actor, Classical solutions of $SU(2)$ Yang--Mills theories, {\em Rev.
Modern Phys.} {\bf51} (1979) 461--525.

\bibitem{1}
C. A. G. Almeida, D. Bazeia, and L. Losano, 
Exploring the vicinity of the Bogomol'nyi--Prasad--Sommerfield bound, 
{\em J. Phys.} A {\bf34} (2001) 3351--3362.

\bibitem{Alon}
A. Alonso-Izquierdo and J. M. Guilarte,
Quantum-induced interactions in the moduli space of degenerate BPS domain walls,
\emph{J. High Energy Phys.} {\bf2014} (2014) 125.

\bibitem{AO}
J. Ambjorn and P. Olesen, On electroweak magnetism, {\em Nucl. Phys.} B {\bf 315} (1989) 606--614.

\bibitem{ABS}
R. Auzzi, S. Bolognesi, and M. Shifman,
Higher winding strings and confined monopoles in ${\cal N}=2$ SQCD,
\emph{Phys. Rev.} D {\bf81} (2010) 085011.

\bibitem{ABSY}
R. Auzzi, S. Bolognesi, M. Shifman, and A. Yung,
Confinement and localization on domain walls,
{\em Phys. Rev.} D {\bf79} (2009) 045016.

\bibitem{2}
D. Bazeia and F. A. Brito, 
Bags, junctions, and networks of BPS and non-BPS defects, 
{\em Phys. Rev.} D {\bf61} (2000) 105019.

\bibitem{3}
D. Bazeia and F. A. Brito, 
Tiling the plane without supersymmetry, 
{\em Phys. Rev. Lett.} {\bf 84} (2000) 1094--1097.

\bibitem{BLW}
D. Bazeia, L. Losano, and C. Wotzasck, Domain walls in three-field models,
{\em
Phys. Rev.} D {\bf66} (2002) 105025.

\bibitem{Beh}
K. Behrndt and S. Gukov,
Domain walls and superpotentials from M theory on Calabi--Yau three-folds,
{\em Nucl. Phys.} B {\bf580} (2000) 225--242.

\bibitem{B}
E. B. Bogomol'nyi, {The stability of classical solutions}, \emph{Sov. J. Nucl. Phys.} {\bf24}
(1976) 449--454.
 
\bibitem{Bolog}
S. Bolognesi and S. B. Gudnason, A note on Chern--Simons solitons -- a type III vortex from the wall vortex,
\emph{Nucl. Phys.} B
{\bf805} (2008) 104--123.

\bibitem{Bor}
G. Bor, 
Yang--Mills fields which are not self-dual,
{\em Commun. Math. Phys.} {\bf145} (1992) 393--410. 

\bibitem{Bur}
A. Burinskii, Supersymmetric superconducting bag as a core of Kerr spinning
particle, {\em Grav. Cosmol.} {\bf8} (2002) 261--271.

\bibitem{C}
F. Canfora,
Non-linear superposition law and Skyrme crystals,
\emph{Phys. Rev.}  D {\bf88} (2013) 065028.

\bibitem{CLY1}
S. Chen, Y. Li, and Y. Yang,
Exact kink solitons in a monopole confinement problem,
\emph{Phys. Rev.} D {\bf86} (2012) 085030.

\bibitem{CLY2}
S. Chen, Y. Li, and Y. Yang,
Exact kink solitons in Skyrme crystals,
\emph{Phys. Rev.} D {\bf89} (2014) 025007.

\bibitem{CQR}
M. Cvetic, F. Quevedo, and S.-J. Rey,
Stringy domain walls and target-space modular invariance, {\em Phys. Rev. Lett.} {\bf67} (1991) 1836--1839.

\bibitem{CS}
M. Cvetic and H. H. Soleng,
Supergravity domain walls,
{\em Phys. Rept.} {\bf282} (1997) 159--223.

\bibitem{de}
A. de Souza Dutra, V. G. C. S. dos Santos, and A. C. Amaro de Faria, Jr.,
PT-symmetric kinks,
\emph{Phys. Rev.}  D {\bf75} (2007) 125001.

\bibitem{Dvali}
G. Dvali, G. Gabadadze, and Z. Kakushadze,
BPS domain walls in large-$N$ supersymmetric QCD, {\em Nucl. Phys.} B {\bf562} (1999) 158--180.

\bibitem{Dvali2}
G. Dvali and Z. Kakushadze,
Large $N$ domain walls as D-branes for ${\cal N}=1$ QCD string,
{\em Nucl. Phys.} B {\bf537} (1999) 297--316.

\bibitem{FH}
P. Forg\'{a}cs and Z. Horv'{a}th,
Topology and saddle points in field theories, {\em Phys. Lett.} B {\bf138} (1984) 397--401.

\bibitem{GT}
G. W. Gibbons and P. K. Townsend, Bogomol'nyi equation for intersecting domain walls, {\em Phys. Rev. Lett.} {\bf83}
(1999) 1727--1730.

\bibitem{Hart}
B. Hartmann and  J. Urrestilla,
Gravitating semilocal strings,
{\em J. Phys.} Conf. Ser. {\bf229} (2010) 012008.

\bibitem{HS}
Z. Hlousek
and D. Spector, Bogomol'nyi explained, \emph{Nucl. Phys.} B {\bf397} (1993) 173--194.

\bibitem{JT}
A. Jaffe and C. H. Taubes, {\em Vortices and Monopoles}, Birkh\"{a}user, Boston, 1980.

\bibitem{KM}
F. R. Klinkhamer and N. S. Manton,
A saddle-point solution in the Weinberg--Salam theory, {\em Phys. Rev.} D {\bf30} (1984) 2212--2220.

\bibitem{Kon-survey}
K. Konishi,
{Advent of non-Abelian vortices and monopoles -- further thoughts about duality and confinement},   {\em 
 Prog. Theor. Phys. Suppl.} {\bf177} (2009) 83--98.


\bibitem{Lee} C.  Lee, K.  Lee, and H. Min, Self-dual Maxwell Chern--Simons
solitons, \emph{ Phys. Lett.} B {\bf252} (1990) 79--83.  B. H. Lee, C. k. Lee and
H. Min, Supersymmetric Chern--Simons vortex systems and fermion
zero modes, {\em Phys. Rev.}  D {\bf45} (1992) 4588.


\bibitem{Manton}
N. S. Manton, Topology in the Weinberg--Salam theory, {\em Phys. Rev.} D {\bf28} (1983) 2019--2026.

\bibitem{Morris}
J. R. Morris,
Supersymmetry and gauge invariance constraints in a $U(1)\times U(1)$-Higgs superconducting cosmic string model,
{\em Phys. Rev.} D {\bf 53} (1996) 2078--2086.


\bibitem{Nash}
C. Nash and S. Sen, {\em Topology and Geometry for Physicists}, Academic, London and New York, 1983.

\bibitem{Parker}
T. H. Parker, 
A Morse theory for equivariant Yang--Mills,
{\em Duke Math. J.} {\bf66} (1992) 337--356. 

\bibitem{V1}
L. Pogosian and T. Vachaspati,
Domain walls in $SU(5)$,
{\em  Phys. Rev. } D {\bf62} (2000) 123506.

\bibitem{V2}
L. Pogosian, D. A. Steer, and T. Vachaspati,
Triplication of $SU(5)$ monopoles,
{\em Phys. Rev. Lett.} {\bf90} (2003) 061801.

\bibitem{PS}
M. K. Prasad and C. M. Sommerfield, {Exact classical solutions for the 't Hooft
monopole and the Julia--Zee dyon}, \emph{Phys. Rev. Lett.} {\bf35} (1975) 760--762.

\bibitem{R}
R. Rajaraman, {\em Solitons and Instantons}, North Holland, Amsterdam, 1982.

\bibitem{SS1}
L. Sadun and J. Segert,  Non-self-dual Yang--Mills connections with nonzero Chern number, {\em Bull. Amer. Math. Soc.} (N.S.) {\bf24} (1991) 163--170. 

\bibitem{SS2}
L. Sadun and J. Segert, Non-self-dual Yang--Mills connections with quadrupole symmetry, {\em Commun. Math. Phys.}
{\bf 145} (1992) 362--391.

\bibitem{SS3}
L. Sadun and J. Segert,
Stationary points of the Yang--Mills action, {\em Comm. Pure Appl. Math.} {\bf45} (1992) 461--484.

\bibitem{SY}
N. Sakai and Y. Yang,
Moduli space of BPS walls in supersymmetric gauge theories,
{\em Commun. Math. Phys.} {\bf 267} (2006) 783--800.

\bibitem{ShY2}
M. Shifman, and A. Yung,
{Supersymmetric solitons and how they help us understand non-Abelian gauge theories},
{\em Rev. Mod. Phys.} {\bf79} (2007) 1139.

\bibitem{ShY}
M. Shifman and A. Yung,
{\em Supersymmetric Solitons}, Cambridge U. Press, Cambridge, U. K., 2009.

\bibitem{SSU}
L. M. Sibner, R. J. Sibner, and K. Uhlenbeck, Solutions to Yang--Mills equations that
are not self-dual, {\em Proc. Nat. Acad. Sci. USA} {\bf86} (1989) 8610--8613.
 

\bibitem{T1}
C. H. Taubes,  On the equivalence of the first and second order equations for gauge theories,
{\em Commun. Math. Phys.} {\bf75} (1980) 207--227.

\bibitem{T2}
C. H. Taubes, The existence of a non-minimal solution to the 
$SU(2)$ Yang--Mills--Higgs equations on $\bfR^3$,
Parts I, II, {\em Commun. Math. Phys.} {\bf86} (1982) 257--320.


\bibitem{Tong}
D. Tong, {TASI lectures on solitons: instantons, monopoles, vortices and kinks}, arXiv:hep-th/0509216.
{Quantum vortex strings: a review}, {\em Annals Phys.} {\bf324} (2009)
30--52.


\bibitem{Wein}
E. J. Weinberg, \emph{
Classical Solutions in Quantum Field Theory},
Cambridge Monographs on Mathematical Physics, Cambridge U. Press, Cambridge, U. K., 2015.

\bibitem{Witten}
E. Witten, Superconductiing strings, {\em Nucl. Phys.} B {\bf249} (1985) 557--592.


\bibitem{ZL}
R. Zhang and F. Li,  Existence and uniqueness of domain wall solitons in a Maxwell-Chern-Simons model,
{\em J. Math. Phys.} {\bf55}  (2014)  023501.

\end{thebibliography}
\end{document}